\theoremstyle{plain}
\newtheorem{theorem}{Theorem}[section]
\newtheorem{lemma}[theorem]{Lemma}
\theoremstyle{definition}
\theoremstyle{remark}
\newcommand{\tr}{\mbox{tr}}
\begin{document}

\articletype{ARTICLE TEMPLATE}

\begin{center}
{\Large \bf Weighted Average Ensemble for Cholesky-based Covariance Matrix Estimation} \\
Xiaoning Kang, Zhenguo Gao, Xi Liang and Xinwei Deng
\end{center}

\begin{abstract}
The modified Cholesky decomposition (MCD) is an efficient technique for estimating a covariance matrix.
However, it is known that the MCD technique often requires a pre-specified variable ordering in the estimation procedure.
In this work, we propose a weighted average ensemble covariance estimation for high-dimensional data based on the MCD technique. It can flexibly accommodate the high-dimensional case and ensure the positive definiteness property of the resultant estimate.
Our key idea is to obtain different weights for different candidate estimates by minimizing an appropriate risk function with respect to the Frobenius norm.
Different from the existing ensemble estimation based on the MCD, the proposed method  provides a sparse weighting scheme such that one can distinguish which variable orderings employed in the MCD are useful for the ensemble matrix estimate.
The asymptotically theoretical convergence rate of the proposed ensemble estimate is established under regularity conditions.
The merits of the proposed method are examined by the simulation studies and a portfolio allocation example of real stock data.
\end{abstract}

\begin{keywords}
Ensemble estimate; modified Cholesky decomposition; portfolio strategy; variable ordering
\end{keywords}

\section{Introduction}
The estimation of covariance matrix plays an important role in the multivariate statistics with a broad spectrum of applications including dimension reduction, linear discriminant analysis, social network, remote sensing, functional magnetic resonance imaging etc.
Recently, scholars have investigated the covariance estimation for a certain type of data, such as the compositional data \citep{cao2019large}, the matrix-valued data \citep{zhang2022covariance} and the spatial data \citep{kidd2022bayesian}.
However, the covariance estimation for high dimensions where the number of variables is larger than the sample size encounters challenges due to two difficulties.
One is the positive definite and symmetric properties required by the covariance matrix itself.
The other is a large amount of parameters involved in the model estimation in the sense that the number of parameters increases rapidly in the quadratic order of the model dimension.
Therefore, the covariance matrix estimation is not an easy work, and has attracted extensive interest among many scholars.

To induce the sparsity for the high-dimensional covariance matrix estimate, \cite{bickel2008covariance} suggested a hard-thresholding method which directly sets the small quantities of the sample covariance matrix to be zero, and provided its theoretical results.
In addition, with the assumption that variables are weakly correlated when they are far away in an  ordering, \cite{furrer2007estimation} and \cite{bickel2008regularized} introduced the tapering and banded covariance estimates which are consistent under some conditions.
Although the computational cost of such thresholding-based methods is small, their estimates cannot guarantee the property of the positive definiteness.
To circumvent this problem, \cite{bien2011sparse} improved the covariance estimation via imposing the $L_1$ penalty on the negative log likelihood.
Furthermore, \cite{xue2012positive} developed a covariance estimation with  additional positive definite constraint added on their objective function.
However, they are computationally intensive due to either the non-convexity of the objective function or the iterative estimation procedure.
More covariance estimation studies can be found in \cite{deng2013penalized, cai2016estimating, huang2017calibration, ledoit2020analytical, jenny2021covariance, xin2023compound}, among others.

Another powerful tool for the estimation of the high-dimensional covariance matrix is the modified Cholesky decomposition (MCD) \citep{pourahmadi1999joint} which is efficient to deal with the two challenges mentioned above.
It guarantees the positive definiteness and symmetry of the covariance estimate via the matrix decomposition, and
transforms the difficult problem of matrix estimation into the easy tasks of solving linear regressions, which is thus statistically meaningful and able to accommodate large number of variables in the high-dimensional data analysis.
In view of this, many papers in the literature have studied the covariance estimation via the MCD \citep{wu2003, huang2006, Leng2011Forward, Pedeli2015, Lv2018Smoothed, kang2021ensemble}.
However, the matrix estimates by the MCD technique depend on the variable ordering when constructing the linear regressions, implying that different variable orderings will lead to different Cholesky-based estimates.
Therefore, it requires a prior knowledge on the variable ordering of data in practice, e.g., the longitudinal data and the spatial data have a natural ordering.
This remarkably narrows down the scope of its applications, since in most cases the real data do not have a natural ordering, or we do not have any information on the variable orderings before the analysis, such as the gene data, the stock data, the industry data and the medical data.
Hence several papers have contributed to solving this ordering issue.
One idea is that a certain variable ordering can be determined based on several criteria via some data driven mechanisms before applying the MCD technique.
Typical criteria include the Isomap proposed by \cite{wagaman2009discovering},
the Bayesian Information Criterion (BIC) suggested by \cite{dellaportas2012cholesky}, the Best Permutation Algorithm (BPA) introduced by \cite{rajaratnam2013best} and so forth.
Although their methods extend the application scope of the MCD approach,
an inappropriately identified ordering will damage the performance of the subsequent Cholesky-based estimation.

In contrast to using only one single variable ordering, \cite{zheng2017cholesky} introduced another idea of employing the ensemble model via the MCD for the covariance estimation, which has been extensively studied in the later literature  \citep{Kang2020An, li2021cholesky, liang2024new, kang2024block}.
These methods have two advantages: (1) they require no prior knowledge on the variable ordering, solving the issue mentioned above;
(2) the ensemble estimators combine useful information attained from different candidate models to reduce the estimation risk caused by relying on one single variable ordering which might be incorrectly selected, thus improving the estimation accuracy.
The numerical studies have empirically shown that the Cholesky-based matrix estimates by the ensemble model perform better than the estimates using one single variable ordering.

Nonetheless, such ensemble matrix estimates by the MCD technique in general have two main disadvantages.
The first one is that they randomly generated a set of different variable orderings, and considered them to contribute equally in constructing the ensemble estimates.
That is, they assigned the same weight to the candidate estimates that are obtained by the MCD corresponding to different variable orderings.
An obvious way to improving the performance of such simply averaging estimators is to consider an ensemble model by using the optimal weights that might be different for each candidate estimate.
The second disadvantage is that they took into account all the randomly generated variable orderings for the ensemble covariance estimation, without distinguishing whether the variable orderings are useful for building the covariance estimators under the framework of the MCD.
To our best knowledge, these two disadvantages lie in all the existing Cholesky-based ensemble methods of matrix estimation.
Especially the second one might affect the estimation accuracy if some useless variable orderings are involved into the covariance estimation.
Therefore, in this paper we propose a Weighted Averaging Ensemble (WAE) for covariance estimation based on the MCD technique.
We take the advantage of the MCD framework that it is able to provide multiple candidates of the covariance estimates by considering different variable orderings.
The proposed method overcomes the first disadvantage by aggregating such candidates via the optimal weights, which are adaptively determined by minimizing a risk function in terms of the Frobenius norm.
The proposed method also solves the second disadvantage through identifying which variable orderings are the most useful for the ensemble covariance estimates by means of imposing a penalty with respect to the weights on the objective function.

The remaining of this article is organized as follows.
In Section~\ref{sec:meth}, we review the MCD for the covariance estimation, then introduce the proposed WAE methods in details, and establish the theoretical properties.
The simulations are conducted in Section~\ref{sec:sim}, and a real case study of portfolio selection is presented in Section~\ref{sec:appl}.
We conclude this work with some discussions in Section~\ref{sec:conc}.
All the technical proofs are in the Appendix.

\section{Methodology}\label{sec:meth}

\subsection{Covariance estimation via the MCD}\label{sec2.1}
Without loss of generality, we assume that $\mathbf{X}=(X_1,...,X_p)^T$ is a $p$-dimensional random vector with mean value of $\mathbf{0}$ and covariance matrix $\mathbf{\Sigma}$.
The MCD technique regresses each variable $X_j$ on its preceding errors $\varepsilon_1, \ldots, \varepsilon_{j-1}$ for $2 \leq j \leq p$.
Specifically, consider a series of linear regressions
\begin{equation}\label{eq2}
X_{j} = \sum_{i=1}^{j-1}l_{ji}\varepsilon_i + \varepsilon_{j},~~~j=2,...,p,
\end{equation}
where $\bm l_{j} = (l_{j1}, \ldots, l_{j(j-1)})^T$ is the vector consisted of regression coefficients, and $\varepsilon_{j}$ is the error term for the $j$th regression with its expectation of 0 and variance of $d_{j}^{2}$.
In addition, let $\varepsilon_1 = X_1$ and $d_{1}^{2} = Var(X_1)$. Define by a diagonal matrix  $\mathbf{D}=diag(d_1^2,...,d_p^2)$ the variance matrix of the error vector $\boldsymbol{\varepsilon} = (\varepsilon_1,...,\varepsilon_p)^T$. Let $\mathbf{L}=(l_{ji})_{p\times p}$ be a lower triangular matrix whose diagonal elements are all ones, where $l_{jj}=1$ and $l_{ji}=0$ if $i>j$.
As a result, the linear regressions \eqref{eq2} can be written in the matrix form of $\mathbf{X} = \mathbf{L}\boldsymbol{\varepsilon}$.
The covariance matrix can be thus written as  $\mathbf{\Sigma} = Var(\mathbf{X}) = Var(\mathbf{L}\boldsymbol{\varepsilon}) = \mathbf{L}\mathbf{D}\mathbf{L}^T$.
By this means, the MCD technique reduces the difficulty of covariance estimation by decomposing it into the estimation of the Cholesky factor matrices $\mathbf{D}$ and $\mathbf{L}$, which can be obtained easily by solving the $p-1$ linear regressions in \eqref{eq2}.

Let $\mathbf{x_1,...,x_n}$ be $n$ independent observations from the random vector $\mathbf{X}$, and $\mathbb{X} = (\mathbf{x_1,...,x_n})^T$ be the $n \times p$ data matrix.
Define the $j$th column of $\mathbb{X}$ by $\mathbf{x}^{(j)}$.
Denote by $\mathbf{e}^{(j)}$ the residuals of the $j$th regression for $j \geq 2$, and $\mathbf{e}^{(1)} = \mathbf{x}^{(1)}$.
Let $\mathbb{Z}^{(j)} = (\mathbf{e}^{(1)}, \ldots, \mathbf{e}^{(j-1)})$ be the matrix containing the first $j-1$ residuals.
For the high-dimensional data, the Lasso regularization is often used to shrink the estimated regression coefficients \citep{huang2006, chang2010estimation}
\begin{equation}\label{eq5}
\hat{\bm l}_{j} = \arg \min_{ \bm l_{j} } \| \mathbf{x}^{(j)} -  \mathbb{Z}^{(j)} \bm l_{j} \|_{2}^{2}
+ \lambda_{j} \| \bm l_{j} \|_{1},~~~j = 2, \ldots, p,
\end{equation}
where $\lambda_{j} \geq 0$ is a tuning parameter.
The symbols $\| \cdot \|_{1}$ and $\| \cdot \|_{2}$ stand for the vector $L_1$ and $L_2$ norms.
$\mathbf{e}^{(j)} = \mathbf{x}^{(j)} -  \mathbb{Z}^{(j)} \hat{\bm l}_{j}$ is used to construct the residuals for the last column of $\mathbb{Z}^{(j+1)}$.
The estimate $\hat{d}_j^2$ is calculated as the sample variance of   $\mathbf{e}^{(j)}$.
Consequently, the covariance estimate is $\mathbf{\hat{\Sigma}} = \mathbf{\hat{L}} \mathbf{\hat{D}} \mathbf{\hat{L}}^T$, where $\mathbf{\hat{L}}$ is constructed with $\hat{\bm l}_{j}$ as its $j$th row, and $\mathbf{\hat{D}} = diag(\hat{d}_1^2, ..., \hat{d}_p^2)$.

\subsection{The proposed estimators}\label{sec2.2}

Since the Cholesky factor matrix estimates $\hat{\mathbf{L}}$ and $\hat{\mathbf{D}}$ depend on the variable ordering $X_1 ,..., X_p$ as one can see from Equation \eqref{eq2}, we apply the ensemble model to solve this issue by considering multiple orderings for the MCD technique.
Define a mapping $\Pi:\{1,2,...,p\} \to \{1,2,...,p\}$ as $\{1,2,...,p\} \to \{\Pi(1),...,\Pi(p)\}$, and let $\mathbf{P}_{\Pi}$ be the corresponding permutation matrix.
Let $(\mathbf{L}_{\Pi}, \mathbf{D}_{\Pi})$ be the matrices of $(\mathbf{L}, \mathbf{D})$ obtained by the MCD using the variable ordering permutation $\Pi$.
Accordingly we have a covariance estimate for $\mathbf{\Sigma}$ under $\Pi$ by transforming $\mathbf{\hat{\Sigma}}_{\Pi} = \mathbf{\hat{L}}_{\Pi} \mathbf{\hat{D}}_{\Pi} \mathbf{\hat{L}}_{\Pi}^T$ back to the original ordering in the following way
\begin{equation}\label{eq3}
	\begin{aligned}
		\mathbf{\hat{\Sigma}}
		= \mathbf{P}_{\Pi} \mathbf{\hat{\Sigma}}_{\Pi} \mathbf{P}_{\Pi}^T
		= \mathbf{P}_{\Pi} \mathbf{\hat{L}}_{\Pi} \mathbf{\hat{D}}_{\Pi} \mathbf{\hat{L}}_{\Pi}^T \mathbf{P}_{\Pi}^T
		= (\mathbf{P}_{\Pi} \mathbf{\hat{L}}_{\Pi} \mathbf{P}_{\Pi}^T) (\mathbf{P}_{\Pi} \mathbf{\hat{D}}_{\Pi} \mathbf{P}_{\Pi}^T) (\mathbf{P}_{\Pi} \mathbf{\hat{L}}_{\Pi}^T \mathbf{P}_{\Pi}^T)  = \mathbf{\hat{L}} \mathbf{\hat{D}} \mathbf{\hat{L}}^T,
	\end{aligned}
\end{equation}
where $\mathbf{\hat{L}} = \mathbf{P}_{\Pi} \mathbf{\hat{L}}_{\Pi} \mathbf{P}_{\Pi}^T$ and $\mathbf{\hat{D}} = \mathbf{P}_{\Pi} \mathbf{\hat{D}}_{\Pi} \mathbf{P}_{\Pi}^T$.
Suppose we consider $M$ different variable orderings $\mathbf{\Pi}_k, k = 1,...,M$, and denote the corresponding estimates $\mathbf{\hat{\Sigma}}, \mathbf{\hat{L}}$, and $\mathbf{\hat{D}}$ in \eqref{eq3} by $\mathbf{\tilde{\Sigma}}_k, \mathbf{\tilde{L}}_k$, and $\mathbf{\tilde{D}}_k$.
\cite{zheng2017cholesky} proposed a model averaging estimate $\mathbf{\hat{\Sigma}}_{zheng} = \frac{1}{M} \sum_{k=1}^{M} \mathbf{\tilde{\Sigma}}_k$, which has a relatively smaller variance due to the averaging and displays better performance than the Cholesky-based covariance estimate with one single variable ordering.
However, since the Cholesky-based covariance estimates $\mathbf{\tilde{\Sigma}}_k$'s with different orderings may have different estimation biases, and \cite{zheng2017cholesky} simply assigns equal weights to each candidate estimate $\mathbf{\tilde{\Sigma}}_k$, their method hence ignores that these estimates under different variable orderings may have different effects to the ensemble model.
Therefore, we propose the following WAE estimate
\begin{equation}\label{eq:prop}
\mathbf{\hat{\Sigma}}_{wae} = \sum_{k=1}^{M} \omega_k \mathbf{\tilde{\Sigma}}_k,
\end{equation}
where $\boldsymbol{\omega} = (\omega_1, ..., \omega_M)^T$ is the vector of weights, which plays a critical role in producing a good and reliable ensemble covariance estimate.
To obtain the optimal weights, we consider to minimize the difference between the ensemble estimator and the true covariance matrix under the Frobenius norm.
It is worth noting that such a strategy is also used in shrinking the estimators of covariance matrix \citep{ledoit2003improved}.
Specifically, we employ the following quadratic loss function as the risk function to determine the optimal weights by
\begin{align*}
\hat{\boldsymbol{\omega}} & = \arg \min_{ \boldsymbol{\omega} } R(\boldsymbol{\omega}) = \arg \min_{ \boldsymbol{\omega} }  E||\omega_1 \mathbf{\tilde{\Sigma}}_1 + ... + \omega_M \mathbf{\tilde{\Sigma}}_M - \mathbf{\Sigma}||_F^2  \\
&\quad \mbox{s.t.} \quad \sum_{i=1}^{M} \omega_i = 1 \quad \mbox{and} \quad \omega_i \geq 0,
\end{align*}
where $E(\cdot)$ represents the expectation.
Let $\mathbf{\Sigma} = (\sigma_{ij})_{p \times p}$ and $\mathbf{\tilde{\Sigma}}_k = (\tilde{\sigma}_{ij}^k)_{p \times p}$. We decompose the risk function as
\begin{align*}
R(\boldsymbol{\omega}) &= E||\omega_1 \mathbf{\tilde{\Sigma}}_1 + ... + \omega_M \mathbf{\tilde{\Sigma}}_M - \mathbf{\Sigma}||_F^2 \\
&= \sum_{i=1}^p \sum_{j=1}^p E (\omega_1 \tilde{\sigma}_{ij}^1 + ... + \omega_M \tilde{\sigma}_{ij}^M - \sigma_{ij})^2 \\
&= \sum_{i=1}^p \sum_{j=1}^p \{ Var(\omega_1 \tilde{\sigma}_{ij}^1 + ... + \omega_M \tilde{\sigma}_{ij}^M) + [E(\omega_1 \tilde{\sigma}_{ij}^1 + ... + \omega_M \tilde{\sigma}_{ij}^M - \sigma_{ij})]^2 \}  \\
&= \sum_{i=1}^p \sum_{j=1}^p \{ \omega_1^2 Var(\tilde{\sigma}_{ij}^1) + ... + \omega_M^2 Var(\tilde{\sigma}_{ij}^M) + \sum_{h \neq \ell} \omega_h \omega_{\ell} Cov (\tilde{\sigma}_{ij}^h, \tilde{\sigma}_{ij}^{\ell}) \\
&~~~+ [\omega_1 E(\tilde{\sigma}_{ij}^1) + ... + \omega_M E(\tilde{\sigma}_{ij}^M) - \sigma_{ij}]^2 \} \\
&= \sum_{i=1}^{p} \sum_{j=1}^{p} \sum_{k=1}^{M} \omega_k^2 [Var(\tilde{\sigma}_{ij}^k) + E^2(\tilde{\sigma}_{ij}^k)] + \sum_{i=1}^{p}\sum_{j=1}^{p}\sum_{h \neq \ell} \omega_h \omega_\ell [Cov(\tilde{\sigma}_{ij}^h,\tilde{\sigma}_{ij}^\ell) + E(\tilde{\sigma}_{ij}^h)E(\tilde{\sigma}_{ij}^\ell)] \\
&\quad-\sum_{i=1}^{p}\sum_{j=1}^{p} \sum_{k=1}^{M}2\omega_k E(\tilde{\sigma}_{ij}^k)\sigma_{ij} + \sum_{i=1}^{p}\sum_{j=1}^{p}\sigma_{ij}^2.
\end{align*}
To write the risk function $R(\boldsymbol{\omega})$ in a quadratic form of a matrix, we define a symmetric matrix $\mathbf{A}= (a_{mn})$ and a vector $\mathbf{b} = (b_{k})$, where
\begin{align*}
a_{mn} = \sum_{i=1}^{p} \sum_{j=1}^{p} Cov(\tilde{\sigma}_{ij}^m, \tilde{\sigma}_{ij}^n) + E(\tilde{\sigma}_{ij}^m) E(\tilde{\sigma}_{ij}^n)~~~\mbox{and}~~~
b_{k} = -2\sum_{i=1}^{p} \sum_{j=1}^{p} E(\tilde{\sigma}_{ij}^k)\sigma_{ij}.
\end{align*}
Consequently the risk function $R(\boldsymbol{\omega})$ is rewritten as
\begin{equation*}	
R(\boldsymbol{\omega}) = \boldsymbol{\omega}^T \mathbf{A} \boldsymbol{\omega} + \mathbf{b}^T \boldsymbol{\omega} + \sum_{i=1}^{p}\sum_{j=1}^{p}\sigma_{ij}^2.
\end{equation*}
The optimal weight vector $\boldsymbol{\omega}$ is estimated as the minimizer of  $R(\boldsymbol{\omega})$, that is,
\begin{equation}\label{obj_weight}
\hat{\boldsymbol{\omega}} = \arg \min_{ \boldsymbol{\omega} } \boldsymbol{\omega}^T \mathbf{A} \boldsymbol{\omega} + \mathbf{b}^T \boldsymbol{\omega} \quad \mbox{s.t.} \quad \sum_{i=1}^{M} \omega_i = 1 ~~~\mbox{and}~~~ \omega_i \geq 0.
\end{equation}
This optimization problem can be easily solved using the function $solve.QP(\cdot)$ in the package $quadprog$ of R software.

%

As we mention in Section 1, not all the variable orderings would be useful in the weighted ensemble model for estimating the covariance matrix based on the MCD.
Some sets of linear regressions of \eqref{eq2} from certain variable orderings may play no roles in the sense that using them in the MCD would increase the model complexity but not improve the estimation accuracy.
It is equivalent to saying that some candidate estimates $\mathbf{\tilde{\Sigma}}_k$ might contribute little or nothing to the weighted ensemble covariance estimation in Equation \eqref{eq:prop}.
In such case, we hence need to rule out such variable orderings that could be useless for the data analysis, and simultaneously to simplify the model complexity. It can be realized by setting the corresponding weights to zeros.
To this end, we adopt the Lasso \citep{tibshirani1996regression} and the adaptive Lasso ideas \citep{zou2006adaptive} and suggest to impose two types of penalties on Equation \eqref{obj_weight}, which yields
\begin{align}\label{obj_weight_lasso}
\hat{\boldsymbol{\omega}} = \arg \min_{ \boldsymbol{\omega} } \boldsymbol{\omega}^T \mathbf{A} \boldsymbol{\omega} + \mathbf{b}^T \boldsymbol{\omega} + \phi \sum_{i=1}^{M} |\omega_i| \quad \mbox{s.t.} \quad \sum_{i=1}^{M} \omega_i = 1 ~~~\mbox{and}~~~ \omega_i \geq 0
\end{align}
and
\begin{align}\label{obj_weight_penalty}
\hat{\boldsymbol{\omega}}^{\ast} = \arg \min_{ \boldsymbol{\omega} } \boldsymbol{\omega}^T \mathbf{A} \boldsymbol{\omega} + \mathbf{b}^T \boldsymbol{\omega} + \xi \sum_{i=1}^{M} \theta_i |\omega_i| \quad \mbox{s.t.} \quad \sum_{i=1}^{M} \omega_i = 1 ~~~\mbox{and}~~~ \omega_i \geq 0,
\end{align}
where $\phi \geq 0$ and $\xi \geq 0$ are tuning parameters.
A little algebra on \eqref{obj_weight_lasso} shows
\begin{align*}
\hat{\boldsymbol{\omega}} &= \arg \min_{ \boldsymbol{\omega} } \boldsymbol{\omega}^T \mathbf{A} \boldsymbol{\omega} + \mathbf{b}^T \boldsymbol{\omega} + \phi \sum_{i=1}^{M} \omega_i \\
&= \arg \min_{ \boldsymbol{\omega} } \boldsymbol{\omega}^T \mathbf{A} \boldsymbol{\omega} + \mathbf{b}^T \boldsymbol{\omega} + \phi \quad \mbox{s.t.} \quad \sum_{i=1}^{M} \omega_i = 1 ~~~\mbox{and}~~~ \omega_i \geq 0,
\end{align*}
which is equivalent to the optimization problem \eqref{obj_weight}.
It is thereby very interesting to see that the proposed WAE method with weight $\hat{\boldsymbol{\omega}}$ solved from \eqref{obj_weight} overcomes two disadvantages of existing methods that are mentioned in the Introduction by simultaneously completing the tasks of allocating different weights to the estimates $\mathbf{\tilde{\Sigma}}_k$ and picking up the most useful variable orderings for the ensemble estimation.

For the optimization problem \eqref{obj_weight_penalty}, let $\bm \theta = (\theta_1, ..., \theta_M)^T$, and then it follows that
\begin{align*}
\hat{\boldsymbol{\omega}}^{\ast} &= \arg \min_{ \boldsymbol{\omega} } \boldsymbol{\omega}^T \mathbf{A} \boldsymbol{\omega} + \mathbf{b}^T \boldsymbol{\omega} + \xi \sum_{i=1}^{M} \theta_i \omega_i \\
&= \arg \min_{ \boldsymbol{\omega} } \boldsymbol{\omega}^T \mathbf{A} \boldsymbol{\omega} + \mathbf{b}^T \boldsymbol{\omega} + \xi \boldsymbol{\theta}^T \boldsymbol{\omega}  \\
&= \arg \min_{ \boldsymbol{\omega} } \boldsymbol{\omega}^T \mathbf{A} \boldsymbol{\omega} + (\mathbf{b} + \xi \bm \theta)^T \boldsymbol{\omega} \quad \mbox{s.t.} \quad \sum_{i=1}^{M} \omega_i = 1 ~~~\mbox{and}~~~ \omega_i \geq 0.
\end{align*}
Similar to the adaptive Lasso idea, we suggest the value of $\bm \theta = 1 / \hat{\boldsymbol{\omega}}$, where $\hat{\boldsymbol{\omega}}$ is the solution from the optimization problem \eqref{obj_weight}.
As a result, we induce a sparsity in the estimated weight vector, which automatically distinguishes and thus selects a set of the most important candidate estimates $\mathbf{\tilde{\Sigma}}_k$, or equivalently the most useful variable orderings, for the ensemble covariance estimation under the framework of MCD based on the data themselves.

The optimal value of the tuning parameter $\xi$ in the optimization problem \eqref{obj_weight_penalty} can be determined by minimizing the negative likelihood function of covariance matrix $\bm \Sigma$.
Specifically, consider a set of candidate values for $\xi$, denoted as $\mathcal{A}_\xi = \{\xi_1, \ldots, \xi_N \}$.
Let $\hat{\bm \Sigma}_{wae^\ast}(\xi_t)$ represent the proposed weighted ensemble covariance estimate computed from Equation \eqref{eq:prop} with its weights solved from the optimization \eqref{obj_weight_penalty} using tuning parameter $\xi_t, t = 1, \ldots, N$.
We then obtain $Q_t = \log|\hat{\bm \Sigma}_{wae^\ast}(\xi_t)| + \mbox{trace} [\hat{\bm \Sigma}^{-1}_{wae^\ast}(\xi_t) \bm S]$, where $\bm S$ is the sample covariance matrix.
Subsequently denote $\tilde{t} = \arg \min_t \{Q_t, t = 1, \ldots, N\}$.
Consequently, the optimal value of $\xi$ is $\xi_{\tilde{t}}$, and the resultant covariance estimate is $\hat{\bm \Sigma}_{wae^\ast}(\xi_{\tilde{t}})$.
That is, the proposed covariance estimate is
\begin{equation*}
\mathbf{\hat{\Sigma}}_{wae^\ast} = \sum_{k=1}^{M} \tilde{\omega}_k^\ast \mathbf{\tilde{\Sigma}}_k,
\end{equation*}
where $\tilde{\boldsymbol{\omega}}^\ast = (\tilde{\omega}_1^\ast, \ldots, \tilde{\omega}_M^\ast)^T$ is the solution of Equation \eqref{obj_weight_penalty} using tuning parameter $\xi_{\tilde{t}}$.

\subsection{Theoretical properties}\label{sec:theorem}
This section establishes the convergence properties of the proposed WAE estimators.
To facilitate the expression of the theoretical results, some assumptions are made on the true model.
Let $\mathbf{\Sigma}_0 = \mathbf{L}_0 \mathbf{D}_0 \mathbf{L}^T_0$ represent the underlying covariance matrix and its MCD.
Similarly, let $\bm \Sigma_{0\Pi_{k}} = \mathbf{L}_{0\Pi_{k}} \mathbf{D}_{0\Pi_{k}} \mathbf{L}_{0\Pi_{k}}^T$ be the MCD of the underlying covariance matrix regarding a variable order $\Pi_{k}$.
Define $\mathscr{C}_{\Pi_k} = \{(i,j): j<i, l_{0ij}^{\Pi_k} \neq 0\}$ to be a set indexing the nonzero elements in the lower triangular part of the matrix $\mathbf{L}_{0 \Pi_k} = (l_{0ij}^{\Pi_k})$.
Then denote the maximum cardinality of $\mathscr{C}_{\Pi_k}$ by $s$.
Additionally, denote the singular values of matrix $\mathbf{\Sigma}_0$ by $sv_{1}(\mathbf{\Sigma}_0) \geq \ldots \geq sv_{p}(\mathbf{\Sigma}_0)$ in a decreasing order.
In order to establish the theoretical property, we assume the regularity conditions as listed below
\begin{itemize}
\item[C1]: The singular values of $\mathbf{\Sigma}_0$ are bounded. That is, there exist constants $l_{1}$ and $u_{1}$ such that $0 < l_{1} < sv_{p}(\mathbf{\Sigma}_0) \leq sv_{1}(\mathbf{\Sigma}_0) < u_{1} < \infty$.
\item[C2]: The tuning parameters $\lambda_{j}$'s in \eqref{eq5} satisfy $\sum_{j=1}^p \lambda_{j} = O(\sqrt{\log(p) / n})$.
\item[C3]: $(s + p) \log (p) = o(n)$.
\end{itemize}
The condition C1 is commonly used to guarantee the positive definiteness of $\mathbf{\Sigma}_0$ in the literature.
The conditions C2 and C3 are used to derive the consistency properties of the proposed estimators.
Let $a_n \asymp b_n$ represent that two sequences $a_n$ and $b_n$ are the same order.
Now we present the main results.

\begin{theorem}\label{theory1}
Suppose that the data are independently and identically distributed from $N(\bm 0, \mathbf{\Sigma}_0)$. Under the conditions C1 - C3, we have
\begin{align*}
\| \mathbf{\hat{\Sigma}}_{wae} - \mathbf{\Sigma}_0 \|_{F} \asymp \| \mathbf{\hat{\Sigma}}_{wae^\ast} - \mathbf{\Sigma}_0 \|_{F} = O_{p} \left( \sqrt{\frac{(s + p) \log (p)}{n}} \right).
\end{align*}
\end{theorem}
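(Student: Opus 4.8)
The plan is to reduce the ensemble bound to a bound on a single Cholesky candidate, exploiting the fact that both weight vectors are feasible points on the probability simplex. The central quantity is $\|\mathbf{\tilde{\Sigma}}_k - \mathbf{\Sigma}_0\|_F$ for a fixed ordering $\Pi_k$, and once this is controlled uniformly over $k = 1, \ldots, M$, the weighting disappears by the triangle inequality. Concretely, because $\hat{\boldsymbol{\omega}}$ (and likewise $\tilde{\boldsymbol{\omega}}^\ast$) satisfies $\sum_k \omega_k = 1$ and $\omega_k \ge 0$, I can write $\mathbf{\hat{\Sigma}}_{wae} - \mathbf{\Sigma}_0 = \sum_{k=1}^M \omega_k (\mathbf{\tilde{\Sigma}}_k - \mathbf{\Sigma}_0)$, so that
\[
\| \mathbf{\hat{\Sigma}}_{wae} - \mathbf{\Sigma}_0 \|_F \le \sum_{k=1}^M \omega_k \|\mathbf{\tilde{\Sigma}}_k - \mathbf{\Sigma}_0\|_F \le \max_{1\le k \le M} \|\mathbf{\tilde{\Sigma}}_k - \mathbf{\Sigma}_0\|_F .
\]
The identical inequality holds for $\mathbf{\hat{\Sigma}}_{wae^\ast}$. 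Thus the entire theorem follows from the single-ordering rate $\max_k \|\mathbf{\tilde{\Sigma}}_k - \mathbf{\Sigma}_0\|_F = O_p(\sqrt{(s+p)\log p / n})$, and since both estimators are convex combinations of the \emph{same} candidates each attaining this common rate, their errors are of the same order, yielding the asserted $\asymp$ relation.

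The substantive work is therefore the single-ordering rate. Fix $k$ and pass to the permuted coordinates, writing $\mathbf{\hat{\Sigma}}_{\Pi_k} = \mathbf{\hat{L}}_{\Pi_k}\mathbf{\hat{D}}_{\Pi_k}\mathbf{\hat{L}}_{\Pi_k}^T$; since $\mathbf{P}_{\Pi_k}$ is orthogonal, $\|\mathbf{\tilde{\Sigma}}_k - \mathbf{\Sigma}_0\|_F = \|\mathbf{\hat{\Sigma}}_{\Pi_k} - \mathbf{\Sigma}_{0\Pi_k}\|_F$, so the ordering is immaterial to the norm and it suffices to bound the factor estimation errors. For the lower-triangular factor, each row $\hat{\bm l}_j$ solves the Lasso problem \eqref{eq5}; under the Gaussian model and the restricted-eigenvalue behaviour guaranteed by C1 (which keeps the singular values of every principal submatrix of $\mathbf{\Sigma}_0$ bounded away from $0$ and $\infty$), the standard Lasso oracle inequality yields $\|\hat{\bm l}_j - \bm l_{0j}\|_2^2 = O_p(s_j \log p / n)$, where $s_j = |\mathscr{C}_{\Pi_k} \cap \{(j,\cdot)\}|$, while the aggregate shrinkage bias is governed by $\sum_j \lambda_j = O(\sqrt{\log p/n})$ from C2. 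Summing and using $\sum_j s_j \le s$ gives $\|\mathbf{\hat{L}}_{\Pi_k} - \mathbf{L}_{0\Pi_k}\|_F^2 = O_p(s \log p / n)$, and a parallel analysis of the residual sample variances delivers $\|\mathbf{\hat{D}}_{\Pi_k} - \mathbf{D}_{0\Pi_k}\|_F^2 = O_p(p \log p / n)$.

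It then remains to propagate these factor-level errors to the covariance scale. Suppressing the $\Pi_k$ subscript, I would use the identity
\[
\mathbf{\hat{L}}\mathbf{\hat{D}}\mathbf{\hat{L}}^T - \mathbf{L}_0\mathbf{D}_0\mathbf{L}_0^T
= (\mathbf{\hat{L}} - \mathbf{L}_0)\mathbf{D}_0\mathbf{L}_0^T
+ \mathbf{L}_0(\mathbf{\hat{D}} - \mathbf{D}_0)\mathbf{L}_0^T
+ \mathbf{L}_0\mathbf{D}_0(\mathbf{\hat{L}} - \mathbf{L}_0)^T + \mathbf{R},
\]
where $\mathbf{R}$ collects the second-order products. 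Condition C1 bounds $\|\mathbf{L}_0\|_2$ and $\|\mathbf{D}_0\|_2$ by constants, and since multiplication by a bounded-operator-norm matrix does not inflate the Frobenius norm by more than that constant, each first-order term is $O_p(\sqrt{(s+p)\log p / n})$, while C3 forces the quadratic remainder $\mathbf{R}$ to be of strictly smaller order. Combining with the first paragraph completes the argument for both $\mathbf{\hat{\Sigma}}_{wae}$ and $\mathbf{\hat{\Sigma}}_{wae^\ast}$.

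The hard part will be the single-ordering rate, and within it the delicate point is establishing that C1 supplies a restricted-eigenvalue (compatibility) condition simultaneously for all $p-1$ nested design matrices $\mathbb{Z}^{(j)}$ built from estimated residuals, and uniformly over the finite family of $M$ orderings, so that the per-row Lasso bounds hold together with high probability. A secondary subtlety is tracking the $\ell_1$ shrinkage bias carefully enough, via C2, that it merges into the $\sqrt{(s+p)\log p/n}$ rate rather than degrading it; the error-propagation step and the convex-combination step are then routine given C1 and C3.
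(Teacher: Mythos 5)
Your top-level argument is exactly the paper's: write $\mathbf{\hat{\Sigma}}_{wae} - \mathbf{\Sigma}_0 = \sum_k \omega_k(\mathbf{\tilde{\Sigma}}_k - \mathbf{\Sigma}_0)$, use the triangle inequality and $\sum_k \omega_k = 1$, $\omega_k \ge 0$ to reduce everything to the single-ordering rate $\|\mathbf{\tilde{\Sigma}}_k - \mathbf{\Sigma}_0\|_F = O_p(\sqrt{(s+p)\log p/n})$, and observe that the same bound applies verbatim to $\mathbf{\hat{\Sigma}}_{wae^\ast}$ (neither you nor the paper proves a matching lower bound for the $\asymp$; both read it as ``same order of upper bound''). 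Your error-propagation step from $(\mathbf{\hat{L}},\mathbf{\hat{D}})$ to $\mathbf{\hat{\Sigma}}$, controlling first-order terms by the bounded operator norms of $\mathbf{L}_0$ and $\mathbf{D}_0$ from C1 and relegating the quadratic remainder to lower order via C3, is also the content of the paper's Lemma A.3 (which it delegates to an external reference). Where you genuinely diverge is in how the factor rates $\|\mathbf{\hat{L}}-\mathbf{L}_0\|_F^2 = O_p(s\log p/n)$ and $\|\mathbf{\hat{D}}-\mathbf{D}_0\|_F^2 = O_p(p\log p/n)$ are obtained. You propose row-by-row Lasso oracle inequalities for the regressions in \eqref{eq5}, which requires a sample-level restricted-eigenvalue or compatibility condition for each design matrix $\mathbb{Z}^{(j)}$ --- matrices built from \emph{sequentially estimated residuals}, with up to $p-1 > n$ columns --- to hold simultaneously over all $j$ and all $M$ orderings. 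You correctly flag this as the delicate point, but it is left unproven, and it is not a routine consequence of C1 (C1 gives population eigenvalue bounds; the sample Gram matrices of the estimated-residual designs are singular for $j-1>n$ and carry accumulated estimation error from earlier rows). The paper's Lemma A.2 takes a different route that avoids this entirely: a global penalized Gaussian likelihood argument in the spirit of \cite{lam2009sparsistency}, showing that the increment $G(\Delta_L,\Delta_D)$ of the penalized negative log-likelihood is positive on the boundary of Frobenius balls of radii $\sqrt{\kappa_1 s\log p/n}$ and $\sqrt{\kappa_2 p\log p/n}$, so the minimizer lies inside. That argument treats $(\mathbf{L},\mathbf{D})$ jointly and never needs a per-row design condition; the price is that it ties the estimator to the penalized-likelihood formulation rather than to the literal row-wise Lasso of \eqref{eq5}. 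In short: your skeleton is the paper's skeleton, your key lemma is proved by a different and harder-to-complete mechanism, and the unresolved restricted-eigenvalue step is the one real gap you would need to close to make your version self-contained.
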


Theorem \ref{theory1} demonstrates the asymptotic convergence rates of the proposed WAE estimators with respect to the Frobenius norm.
Such rate is parallel with that of some existing works \citep{rothman2008sparse, lam2009sparsistency}.
The review paper of \cite{cai2016estimating} pointed out that the high-dimensional covariance estimators have the convergence rate in the form of $O_{p}(\sqrt{\frac{(\kappa + p) \log (p)}{n}})$ in terms of the Frobenius norm, where $\kappa$ is a measure of the sparsity for the underlying covariance matrix.

\subsection{Estimates of matrix $\mathbf{A}$ and vector $\mathbf{b}$}\label{sec2.3}
In order to solve the weights in the optimizations \eqref{obj_weight} and \eqref{obj_weight_penalty}, we need to find consistent estimators for $Var(\tilde{\sigma}_{ij}^k)$, $Cov(\tilde{\sigma}_{ij}^k, \tilde{\sigma}_{ij}^\ell)$, $E(\tilde{\sigma}_{ij}^k)E(\tilde{\sigma}_{ij}^\ell)$ and $E(\tilde{\sigma}_{ij}^k)\sigma_{ij}$ in the matrix $\mathbf{A}$ and vector $\mathbf{b}$.
Define $\varphi^k_{ij}=AsyVar(\sqrt{n}\tilde{\sigma}_{ij}^k)$, $\rho^{k\ell}_{ij} = AsyCov(\sqrt{n}\tilde{\sigma}_{ij}^k,\sqrt{n}\tilde{\sigma}_{ij}^\ell)$,  $\gamma^{k\ell}_{ij}=E(\tilde{\sigma}_{ij}^k)E(\tilde{\sigma}_{ij}^\ell)$ and  $\eta^k_{ij}=E(\tilde{\sigma}_{ij}^k)\sigma_{ij}$, where $AsyVar$ and $AsyCov$ represent the asymptotic variance and covariance.
Let $m_j = \sum_{i=1}^n \mathbf{x}_i^{(j)} / n$ be the mean value of the $j$th column of data matrix $\mathbb{X}$.
Standard asymptotic theory provides consistent estimators for $\varphi^k_{ij}$, $\rho^{k\ell}_{ij}$, $\gamma^{k\ell}_{ij}$ and $\eta^{k}_{ij}$, as shown in the following lemmas.

\begin{lemma}\label{lemma1}
A consistent estimator for $\varphi^k_{ij}$ is given by
\begin{equation*} \hat{\varphi}^k_{ij}=\frac{1}{n}\sum_{t=1}^{n}[(x_{ti}-m_i)(x_{tj}-m_j)-\tilde{\sigma}_{ij}^k]^2.
\end{equation*}
\end{lemma}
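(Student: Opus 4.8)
The plan is to recognize $\hat{\varphi}^k_{ij}$ as a plug-in estimate of the asymptotic variance of the sample covariance entry, and to show that this asymptotic variance coincides with $\varphi^k_{ij}$. Write $Y_t = (x_{ti}-m_i)(x_{tj}-m_j)$ for the $t$th cross-product, so that $\hat{\varphi}^k_{ij} = \frac{1}{n}\sum_{t=1}^{n}(Y_t - \tilde{\sigma}_{ij}^k)^2$ is the empirical second moment of the $Y_t$ about the center $\tilde{\sigma}_{ij}^k$. The sample mean $\bar{Y} = \frac{1}{n}\sum_{t=1}^n Y_t$ is exactly the $(i,j)$ entry of the sample covariance matrix, and under the Gaussian model the classical central limit theorem gives that $\sqrt{n}(\bar{Y} - \sigma_{ij})$ is asymptotically normal with variance $\mathrm{Var}[(X_i-\mu_i)(X_j-\mu_j)] = \sigma_{ii}\sigma_{jj} + \sigma_{ij}^2$, the last equality following from Isserlis' formula for normal fourth moments.

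The first step is to identify $\varphi_{ij}^k = \mathrm{AsyVar}(\sqrt{n}\,\tilde{\sigma}_{ij}^k)$ with this same quantity. The key point is that the leading stochastic term of $\sqrt{n}(\tilde{\sigma}_{ij}^k - \sigma_{ij})$ is governed by the same empirical moments as the sample covariance: the reconstruction $\mathbf{\tilde{\Sigma}}_k = \mathbf{\tilde{L}}_k\mathbf{\tilde{D}}_k\mathbf{\tilde{L}}_k^T$ is a smooth reparametrization of the covariance that, in the absence of regularization, returns the sample covariance exactly, so an entry-wise expansion through the Cholesky factors shares its variance term. Condition C1 keeps this reparametrization well conditioned, while condition C2 ensures the Lasso penalty enters only the asymptotic bias and not the leading variance term. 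Consequently $\varphi_{ij}^k = \mathrm{Var}[(X_i-\mu_i)(X_j-\mu_j)]$, and in particular $\tilde{\sigma}_{ij}^k$ is consistent for $\sigma_{ij}$.

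The second step is the convergence of the plug-in quantity. Using the elementary identity $\frac{1}{n}\sum_{t=1}^n(Y_t - c)^2 = \frac{1}{n}\sum_{t=1}^n(Y_t - \bar{Y})^2 + (\bar{Y} - c)^2$ with $c = \tilde{\sigma}_{ij}^k$, one obtains $\hat{\varphi}_{ij}^k = \frac{1}{n}\sum_{t=1}^n(Y_t-\bar{Y})^2 + (\bar{Y} - \tilde{\sigma}_{ij}^k)^2$. The first term is the sample variance of the cross-products; by the weak law of large numbers, together with the continuous mapping theorem to absorb the estimated means $m_i, m_j \to_p \mu_i, \mu_j$, it converges in probability to $\mathrm{Var}[(X_i-\mu_i)(X_j-\mu_j)]$. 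The second term is $(\bar{Y} - \tilde{\sigma}_{ij}^k)^2$, which converges in probability to zero since both $\bar{Y}$ and $\tilde{\sigma}_{ij}^k$ are consistent for $\sigma_{ij}$. Combining the two limits via Slutsky's theorem yields $\hat{\varphi}_{ij}^k \to_p \mathrm{Var}[(X_i-\mu_i)(X_j-\mu_j)] = \varphi_{ij}^k$, the claimed consistency.

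I expect the main obstacle to be the first step, namely rigorously verifying that the Lasso-regularized MCD entry inherits the sample covariance's asymptotic variance. This requires an entry-wise asymptotic expansion of $\tilde{\sigma}_{ij}^k$ through the estimated Cholesky factors $\mathbf{\tilde{L}}_k, \mathbf{\tilde{D}}_k$ and a demonstration that the penalty contributes only lower-order (bias) terms, controlled by C2 together with the sparsity bound $s$ and the dimension growth in C3; the permutation $\mathbf{P}_{\Pi_k}$ only relabels indices and does not affect the argument. Once this equivalence is in place, the remaining steps are routine applications of the law of large numbers and Slutsky's theorem.
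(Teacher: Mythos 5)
Your proposal is correct and follows essentially the same route as the paper: both arguments reduce $\tilde{\sigma}_{ij}^k$ to the sample covariance entry $s_{ij}$ via the exact-reconstruction property of the unregularized MCD, and then appeal to the consistency of the plug-in estimator of the asymptotic variance of $\sqrt{n}\,s_{ij}$, which the paper simply cites as Lemma 1 of Ledoit and Wolf (2003) and you derive explicitly via the decomposition $\frac{1}{n}\sum_t(Y_t-c)^2=\frac{1}{n}\sum_t(Y_t-\bar{Y})^2+(\bar{Y}-c)^2$, the law of large numbers, and Slutsky's theorem. The obstacle you flag --- that the Lasso-regularized entries only approximately equal $s_{ij}$ --- is present in the paper's proof as well, which likewise disposes of it with the phrase ``without regularization.''
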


\begin{proof}
Without loss of generality, we assume $i<j$.
Because $X_i = \sum_{k=1}^{i}l_{ik} \varepsilon_k$ and $X_j = \sum_{k=1}^{j}l_{jk} \varepsilon_k$, we have $Cov(X_i,X_j) = \sum_{k=1}^{i}l_{ik}l_{jk}d_k^2$.
Hence $s_{ij} = \sum_{k=1}^{i}\hat{l}_{ik} \hat{l}_{jk} \hat{d}_k^2 = \tilde{\sigma}_{ij}^k$ without regularization \citep{zheng2017cholesky}.
By the Lemma 1 in \cite{ledoit2003improved}, $\hat{\varphi}^k_{ij}$ converges in probability to $\varphi^k_{ij}$.
\end{proof}

\begin{lemma}\label{lemma2}
A consistent estimator for $\rho^{k\ell}_{ij}$ is given by
\begin{equation*}
\hat{\rho}^{k\ell}_{ij}=\frac{1}{n}\sum_{t=1}^{n}[(x_{ti}-m_i)(x_{tj}-m_j)-\tilde{\sigma}_{ij}^k][(x_{ti}-m_i)(x_{tj}-m_j)-\tilde{\sigma}_{ij}^\ell].
\end{equation*}
\end{lemma}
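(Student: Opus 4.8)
The plan is to reduce Lemma~\ref{lemma2} to Lemma~\ref{lemma1} by observing that, in the present (unregularized) setting, the target parameter $\rho^{k\ell}_{ij}$ and its estimator $\hat{\rho}^{k\ell}_{ij}$ collapse onto the variance quantities already handled. The starting point is the identity established in the proof of Lemma~\ref{lemma1} and attributed to \cite{zheng2017cholesky}: without regularization the MCD reconstructs the sample covariance matrix exactly, so that the $(i,j)$ entry satisfies $\tilde{\sigma}_{ij}^k = s_{ij}$ for \emph{every} variable ordering $\Pi_k$. First I would apply this to both orderings $\Pi_k$ and $\Pi_\ell$, yielding $\tilde{\sigma}_{ij}^k = \tilde{\sigma}_{ij}^\ell = s_{ij}$.

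Given this coincidence, the next step is to show the target parameter reduces to the asymptotic variance treated in Lemma~\ref{lemma1}. Since the two estimators are literally the same random variable $s_{ij}$, the asymptotic covariance of a quantity with itself is just its asymptotic variance:
\begin{equation*}
\rho^{k\ell}_{ij} = AsyCov(\sqrt{n}\,\tilde{\sigma}_{ij}^k, \sqrt{n}\,\tilde{\sigma}_{ij}^\ell) = AsyCov(\sqrt{n}\,s_{ij}, \sqrt{n}\,s_{ij}) = AsyVar(\sqrt{n}\,s_{ij}) = \varphi^k_{ij}.
\end{equation*}
Parallel to this, I would note that the proposed estimator $\hat{\rho}^{k\ell}_{ij}$ reduces to $\hat{\varphi}^k_{ij}$ under the same identity: because $\tilde{\sigma}_{ij}^k = \tilde{\sigma}_{ij}^\ell$, the two bracketed factors in each summand of $\hat{\rho}^{k\ell}_{ij}$ are identical, so the product becomes a square and the whole expression is exactly the $\hat{\varphi}^k_{ij}$ of Lemma~\ref{lemma1}.

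The conclusion is then immediate: Lemma~\ref{lemma1} already gives $\hat{\varphi}^k_{ij} \to \varphi^k_{ij}$ in probability (via the fourth-moment argument of \cite{ledoit2003improved}), and hence $\hat{\rho}^{k\ell}_{ij} \to \varphi^k_{ij} = \rho^{k\ell}_{ij}$ in probability, which is the desired consistency. The only substantive ingredient is the ordering-invariance of the unregularized MCD reconstruction; once that is in hand—and it already underlies Lemma~\ref{lemma1}—no genuine obstacle remains. I would flag that the difficulty lies entirely in that reduction step: were one to work instead with the Lasso-regularized estimators, the hard part would be controlling the bias and cross-ordering variance introduced by shrinkage, so that $\tilde{\sigma}_{ij}^k$ and $\tilde{\sigma}_{ij}^\ell$ would no longer coincide and the covariance would have to be analyzed directly; but under the framework stated here that complication does not arise.
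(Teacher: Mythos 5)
Your reduction to Lemma~\ref{lemma1} via the unregularized identity $\tilde{\sigma}_{ij}^k = s_{ij}$ is essentially the route the paper takes: its entire proof of Lemma~\ref{lemma2} is the single sentence that it is ``an easy extension of the result from Lemma~\ref{lemma1},'' and that extension rests on exactly the same identity and the same appeal to Lemma~1 of Ledoit and Wolf (2003) that you invoke. You have merely spelled out what the paper leaves implicit, including the caveat that the argument degenerates (and would require genuine work) once the Lasso-regularized estimators actually used in the methodology are substituted---a limitation the paper's own one-line proof shares but does not acknowledge.
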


\begin{proof}
It is an easy extension of the result from Lemma \ref{lemma1}.
\end{proof}


Under conditions C1 - C3 and from Theorem \ref{theory1}, estimators for $\gamma^{k\ell}_{ij}$ and $\eta^k_{ij}$ are $\tilde{\sigma}_{ij}^k \tilde{\sigma}_{ij}^\ell$ and $\tilde{\sigma}_{ij}^k s_{ij}$, where $s_{ij}$ is the $(i,j)$th element of the sample covariance matrix.
Consequently, we can replace $\varphi_{ij}^k$, $\rho_{ij}^{k\ell}$, $\gamma_{ij}^{k\ell}$ and $\eta_{ij}^k$ with their corresponding estimators into the optimizations \eqref{obj_weight} and \eqref{obj_weight_penalty} to solve the weights for each $\mathbf{\tilde{\Sigma}}_k$.

\section{Simulation}\label{sec:sim}
In this section, the performances of the proposed WAE estimators $\mathbf{\hat{\Sigma}}_{wae}$ and $\mathbf{\hat{\Sigma}}_{wae^\ast}$ are evaluated by the comparison with the method developed by \cite{zheng2017cholesky}, denoted as $\mathbf{\hat{\Sigma}}_{zheng}$, which will demonstrate the advantages of the optimal weights solved from the optimization problems \eqref{obj_weight} and \eqref{obj_weight_penalty} over the equal weights for the average ensemble idea in the MCD.
Let $\mathbb{I}_{\{\cdot\}}$ represent the indicator function.
We consider seven different setups for the true covariance matrix in the simulation studies.

\begin{itemize}
	\item[] {Scenario-1 (Identity Structure)}: $\mathbf{\Sigma_1} = \mathbf{I}$ is the identity matrix.
	\item[] {Scenario-2 (Compact Banded Structure)}:
$\mathbf{\Sigma_2} = \mathbf{B}^T\mathbf{B}$, where $\mathbf{B}=(b_{st})$ with $b_{st} = \mathbb{I}_{\{s=t\}} + 0.8\mathbb{I}_{\{s-t=1\}} + 0.6\mathbb{I}_{\{s-t=2\}}$, is the second-order moving average structure.
	\item[] {Scenario-3 (Permuted Banded Structure)}: $\mathbf{\Sigma_3}$ is obtained by randomly permutating rows and corresponding columns of $\mathbf{\Sigma_2}$.
	\item[] {Scenario-4 (Loose Banded Structure)}: $\mathbf{\Sigma_4}=\mathbf{B}^T\mathbf{B}$, where $\mathbf{B}=(b_{st})$ with $b_{st}=\mathbb{I}_{\{s=t\}}+0.8\mathbb{I}_{\{s-t=1\}}+0.6\mathbb{I}_{\{2 \leq s-t \leq 5\}}$.
	\item[] {Scenario-5 (Block Diagonal Structure)}: $\mathbf{\Sigma_5}=(\sigma_{st})$, $\sigma_{st}=\mathbb{I}_{\{s=t\}}+0.5\mathbb{I}_{\{s \neq t, s \leq 20, t \leq 20\}}$.
	\item[] {Scenario-6 (Dense Structure)}: $\mathbf{\Sigma_6}=\mathbf{B}\mathbf{B}^T$, $\mathbf{B}=(b_{st})$ is a unit lower triangular matrix with $b_{st}$ independently generated from normal distribution $N(0, 0.2)$.
    \item[] {Scenario-7 (Compound Structure)}: $\mathbf{\Sigma_7}=(\sigma_{st})$,
$\sigma_{st}=\mathbb{I}_{\{s=t\}} + 0.5\mathbb{I}_{\{s \neq t\}}$.
\end{itemize}

Scenario-1 is the identity matrix, indicating variables are independent from each other.
Scenarios-2 and 4 are banded matrices, indicating each variable is only correlated with its most nearby variables.
Scenario-5 has a compound structure with element value 0.5 on the upper left corner.
Scenario-6 represents a randomly generated matrix.
Scenario-7 indicates that each variable is correlated with others.
Therefore, the true covariance matrices in this study include very sparse cases (e.g. Scenario-1, Scenario-2, Scenario-3), moderate sparse cases (e.g. Scenario-4, Scenario-5) as well as the dense cases (e,g, Scenario-6, Scenario-7).
In order to measure the accuracy of covariance estimates, we consider the following three loss functions
\begin{align*}
\mbox{F} = || \mathbf{\hat{\Sigma}} - \mathbf{\Sigma} ||_F, ~~~
\mbox{MAE} = \frac{1}{p^2} \sum_{i,j} | \hat{\sigma}_{ij} - \sigma_{ij} | ~~~\mbox{and}~~~
L_2 = \lambda_{\max}(\mathbf{\hat{\Sigma}}-\mathbf{\Sigma}),
\end{align*}
where $\mathbf{\hat{\Sigma}} = (\hat{\sigma}_{ij})$ stands for a covariance estimate $\mathbf{\Sigma} = (\sigma_{ij})$, and $\lambda_{\max}$ represents the maximum eigenvalue of the matrix.
For each scenario, the data are independently generated from multivariate normal distribution $N(\bm 0, \bm \Sigma)$ with the sample size $n=50$, and the dimensionality $p=50, 100$.
The number of variable orderings $M$ is set to be 30 as suggested by \cite{zheng2017cholesky}.
Such 30 variable orderings are uniformly sampled, that is, they are randomly generated with equal probability from all $p!$ possible orderings.
The LASSO technique of \eqref{eq5} in the proposed method is implemented by the function $glmnet(\cdot)$ in the R software with the value of its tuning parameter selected by the function itself.
The set $\mathcal{A}_\xi$ for the tuning parameter $\xi$ in the optimization problem \eqref{obj_weight_penalty} contains an arithmetic sequence starting from 0.01 to 3 with the common difference 0.05.
Table~\ref{tab0} shows the averaged values and corresponding standard errors of loss functions for each compared estimator over 50 replications.

\begin{table}[h]
    \centering
	\caption{Means and standard errors (in parentheses) of loss functions from each estimate for normal data.}
    \resizebox{\textwidth}{!}{ 
    \begin{tabular}{crrrrrrrrr}
    \hline
		\multirow{2}*{Scenario} & \multicolumn{3}{c}{$(n,p) = (50,50)$} & \multicolumn{3}{c}{$(n,p) = (50,100)$} \\
        ~ & $\mathbf{\hat{\Sigma}}_{zheng}$ & $\mathbf{\hat{\Sigma}}_{wae}$ & $\mathbf{\hat{\Sigma}}_{wae^\ast}$ & $\mathbf{\hat{\Sigma}}_{zheng}$ & $\mathbf{\hat{\Sigma}}_{wae}$ & $\mathbf{\hat{\Sigma}}_{wae^\ast}$\\
        \hline
        \quad \quad \quad \quad F \\
        \hline
        $\mathbf{\Sigma_1}$ & 4.36(0.03) & 3.43(0.09) & 3.59(0.08) & 4.88(0.02) & 4.36(0.06) & 4.44(0.06) \\
        $\mathbf{\Sigma_2}$ & 10.2(0.05) & 7.80(0.06) & 7.90(0.07) & 15.3(0.05) & 12.2(0.05) & 12.3(0.06) \\
        $\mathbf{\Sigma_3}$ & 10.3(0.05) & 7.94(0.06) & 8.00(0.07) & 15.3(0.05) & 12.2(0.06) & 12.3(0.07) \\
        $\mathbf{\Sigma_4}$ & 27.9(0.08) & 16.6(0.20) & 16.8(0.21) & 41.4(0.08) & 28.1(0.18)  & 28.6(0.18) \\
        $\mathbf{\Sigma_5}$ & 8.81(0.04) & 6.06(0.17) & 6.64(0.16) & 9.27(0.03) & 7.22(0.14) & 7.68(0.13) \\
		$\mathbf{\Sigma_6}$ & 20.6(0.06) & 16.8(0.09) & 17.1(0.09) & 51.5(0.08) &
        48.6(0.09) & 49.1(0.09) \\
        $\mathbf{\Sigma_7}$ & 23.0(0.05) & 15.6(0.44) & 16.2(0.42) & 47.9(0.05) & 37.3(0.61) & 37.3(0.61) \\
		\hline
        \quad \quad \quad \quad $L_2$ \\
        \hline
        $\mathbf{\Sigma_1}$ & 4.00(0.03) & 2.72(0.11) & 2.86(0.11) & 4.15(0.02) & 3.24(0.07) & 3.30(0.07) \\
        $\mathbf{\Sigma_2}$ & 3.64(0.02) & 3.09(0.03) & 3.08(0.04) & 3.84(0.02) & 3.43(0.02) & 3.45(0.02) \\
        $\mathbf{\Sigma_3}$ & 3.68(0.02) & 3.15(0.04) & 3.17(0.04) & 3.84(0.02) & 3.44(0.03) & 3.45(0.03) \\
        $\mathbf{\Sigma_4}$ & 13.1(0.05) & 9.05(0.16) & 9.10(0.16) & 13.8(0.03) & 10.8(0.09)  & 10.9(0.09) \\
        $\mathbf{\Sigma_5}$ & 8.64(0.04) & 5.64(0.20) & 6.19(0.17) & 8.90(0.03) & 6.55(0.16) & 6.90(0.15) \\
		$\mathbf{\Sigma_6}$ & 9.49(0.05) & 7.46(0.09) & 7.56(0.09) & 18.0(0.07) &
        16.8(0.11) & 16.9(0.10) \\
        $\mathbf{\Sigma_7}$ & 22.9(0.05) & 15.3(0.45) & 16.0(0.44) & 47.8(0.05) & 37.1(0.62) & 37.0(0.62) \\
        \hline
        \quad \quad \quad \quad MAE \\
        \hline
        $\mathbf{\Sigma_1}$ & 0.026(0.000) & 0.024(0.001) & 0.024(0.001) & 0.012(0.000) & 0.011(0.000) & 0.011(0.000) \\
        $\mathbf{\Sigma_2}$ & 0.083(0.000) & 0.082(0.001) & 0.081(0.001) & 0.050(0.000) & 0.048(0.000) & 0.048(0.000) \\
        $\mathbf{\Sigma_3}$ & 0.084(0.000) & 0.083(0.001) & 0.083(0.001) & 0.050(0.000) & 0.048(0.000) & 0.048(0.000) \\
        $\mathbf{\Sigma_4}$ & 0.271(0.001) & 0.203(0.002) & 0.203(0.002) & 0.149(0.000) & 0.128(0.001) & 0.128(0.001) \\
        $\mathbf{\Sigma_5}$ & 0.079(0.000) & 0.056(0.002) & 0.061(0.001) & 0.024(0.000) & 0.021(0.000) & 0.022(0.000) \\
		$\mathbf{\Sigma_6}$ & 0.315(0.001) & 0.266(0.001) & 0.269(0.001) &
        0.398(0.001) & 0.381(0.001) & 0.385(0.001) \\
        $\mathbf{\Sigma_7}$ & 0.458(0.001) & 0.299(0.009) & 0.311(0.009) & 0.478(0.001) & 0.366(0.007) & 0.364(0.007) \\
        \hline
    \end{tabular}}
    \label{tab0}
\end{table}

\begin{table}[h]
    \centering
	\caption{Means and standard errors (in parentheses) for the number of non-zeros weights for the proposed estimates.}
    \begin{tabular}{crrrrrrrrr}
    \hline
		\multirow{2}*{Scenario} & \multicolumn{2}{c}{$(n,p) = (50,50)$} & \multicolumn{2}{c}{$(n,p) = (50,100)$} \\
        ~  & $\mathbf{\hat{\Sigma}}_{wae}$ & $\mathbf{\hat{\Sigma}}_{wae^\ast}$ & $\mathbf{\hat{\Sigma}}_{wae}$ & $\mathbf{\hat{\Sigma}}_{wae^\ast}$\\
        \hline
        $\mathbf{\Sigma_1}$ & 7.12(0.27) & 3.68(0.21) & 8.46(0.33) & 5.22(0.17) \\
        $\mathbf{\Sigma_2}$ & 9.02(0.30) & 4.86(0.22) & 9.86(0.25) & 5.34(0.15) \\
        $\mathbf{\Sigma_3}$ & 8.94(0.26) & 6.00(0.15) & 9.76(0.28) & 6.18(0.15) \\
        $\mathbf{\Sigma_4}$ & 7.64(0.27) & 4.66(0.16) & 8.62(0.25) & 4.48(0.14) \\
        $\mathbf{\Sigma_5}$ & 5.52(0.28) & 2.32(0.08) & 6.32(0.26) & 2.86(0.16) \\
		$\mathbf{\Sigma_6}$ & 12.1(0.27) & 6.76(0.21) & 10.5(0.32) & 6.70(0.15) \\
        $\mathbf{\Sigma_7}$ & 5.58(0.29) & 3.24(0.12) & 4.62(0.21) & 2.92(0.16) \\
		\hline
    \end{tabular}
    \label{tab1}
\end{table}

It can be seen that overall two proposed WAE methods are substantially superior to the estimate $\mathbf{\hat{\Sigma}}_{zheng}$ in terms of $F$ and $L_2$ loss functions for all the considered covariance structures, and they are at least comparable with or slightly better than $\mathbf{\hat{\Sigma}}_{zheng}$ with respect to the MAE loss.
By comparing the results of Scenario-2 and Scenario-3, we observe that the proposed WAE methods have the similar performances when we permute the rows and columns of the banded covariance matrix randomly.
Besides, by the comparison of two proposed methods, we observe that the estimate $\mathbf{\hat{\Sigma}}_{wae^\ast}$ is almost comparable with but sometimes slightly worse the estimate $\mathbf{\hat{\Sigma}}_{wae}$.
A possible reason can be found in Table \ref{tab1}, which displays the number of non-zero elements in the estimated weights $\hat{\boldsymbol{\omega}}$ and $\hat{\boldsymbol{\omega}}^\ast$ from the optimizations \eqref{obj_weight} and \eqref{obj_weight_penalty}, respectively.
We observe that the ensemble estimate $\mathbf{\hat{\Sigma}}_{wae^\ast}$ aggregates less number of candidate estimates than $\mathbf{\hat{\Sigma}}_{wae}$, implying that it reduces the model complexity and thus picks up the most useful variable orderings for constructing the covariance estimate.
Therefore, the estimate $\mathbf{\hat{\Sigma}}_{wae}$ might be a little more accurate sometimes, and the estimate $\mathbf{\hat{\Sigma}}_{wae^\ast}$ leads to a simpler model.
Furthermore from Table \ref{tab1} we conclude that in practice it is no need to calculate every candidate estimate $\mathbf{\tilde{\Sigma}}_k$ with respect to all the $p!$ variable orderings, which is very computationally expensive, since there are only a few orderings that are the most important for constructing the ensemble matrix estimate.

\begin{table}[h]
    \centering
	\caption{Means and standard errors (in parentheses) of loss functions from each estimate for non-normal data.}
    \resizebox{\textwidth}{!}{ 
    \begin{tabular}{ccrrrrrrrr}
    \hline
		& \multirow{2}*{Case} & \multicolumn{3}{c}{$(n,p) = (50,50)$} & \multicolumn{3}{c}{$(n,p) = (50,100)$} \\
        ~ && $\mathbf{\hat{\Sigma}}_{zheng}$ & $\mathbf{\hat{\Sigma}}_{wae}$ & $\mathbf{\hat{\Sigma}}_{wae^\ast}$ & $\mathbf{\hat{\Sigma}}_{zheng}$ & $\mathbf{\hat{\Sigma}}_{wae}$ & $\mathbf{\hat{\Sigma}}_{wae^\ast}$\\
        \hline
        F&(I) & 16.4(1.32) & 14.0(1.47) & 14.4(1.53) & 46.3(0.14) & 37.7(2.30) & 38.7(2.52) \\
        &(II) & 23.3(0.13) & 17.0(0.62) & 17.0(0.63) & 48.0(0.07) & 37.9(0.71) & 37.8(0.71) \\
        \hline
        $L_2$&(I) & 16.1(1.30) & 12.6(1.44) & 12.9(1.50) & 46.0(0.14) & 34.6(2.35) & 35.1(2.60) \\
        &(II) & 23.1(0.06) & 16.5(0.56) & 16.5(0.57) & 47.9(0.05) & 37.7(0.71) & 37.6(0.71) \\
        \hline
        MAE&(I) & 0.323(0.026) & 0.233(0.023) & 0.238(0.024) & 0.460(0.001) & 0.328(0.013) & 0.334(0.014) \\
        &(II) & 0.464(0.002) & 0.327(0.013) & 0.326(0.013) & 0.479(0.001) & 0.372(0.007) & 0.371(0.008) \\
        \hline
    \end{tabular}}
    \label{tab:robust}
\end{table}

Notice that the MCD technique empirically does not require the normal property of data to estimate the covariance matrix.
It is thus interesting to examine the robust performance of the proposed WAE methods for the non-normal data.
Now we conduct a simulation study using $\bm \Sigma_7$ as the true covariance matrix, and consider the following two cases of data generating processes with $n = 50$ and $p = 50,100$.

$\textbf{Case}$ (I). Independently generate data from $t$ distribution with degrees of freedom 4, as well as mean $\bm 0$ and $\bm \Sigma_7$ as its scale matrix.

$\textbf{Case}$ (II). Independently generate data from a mixed distribution $(1 - b) N(\bm 0, \bm \Sigma_7) + b \bm T$, where $b$ is a random number following Bernoulli distribution with the probability of success equal to 0.1. Here $\bm T$ represents $t$ distribution with degrees of freedom 4, as well as mean $\bm 5$ and the identity matrix as its scale matrix.

The data in Case (I) come from a heavy-tailed distribution, and Case (II) presents a set of contaminated data.
Table \ref{tab:robust} displays the averaged values and corresponding standard errors of loss functions from each competitor over 50 replications.
Overall, the proposed WAE methods outperform the estimate $\mathbf{\hat{\Sigma}}_{zheng}$ substantially under the considered losses, implying the advantages of the ensemble idea with the optimally selected weights for different candidate covariance estimates.

To sum up, the simulation studies demonstrate that the proposed WAE methods improve the performance of the ensemble estimation for covariance matrix under the framework of MCD technique by assigning each covariance candidate estimate with different weights that are obtained from minimizing the risk functions as in \eqref{obj_weight} and \eqref{obj_weight_penalty}.

\section{Case Study of Portfolio Allocation}\label{sec:appl}
To further explore the performances of the proposed WAE methods, we apply them to analyze a real stock market data, where the estimated covariance matrix is subsequently used for the decision on the portfolio allocation.
	
The common portfolio strategy \citep{Markowitz1952} tries to minimize the risk of an expected return level by diversifying the investment of various assets.
Let $\mathbf{w} = (w_1, ..., w_p)^T$ represent the proportions of a set of assets in the portfolio, and $\mathbf{\Sigma}$ denote the volatility matrix of the returns in the asset pool.
The minimum variance portfolio selection problem is formulated as
\begin{equation}\label{eq4}
\min \mathbf{w}^T \mathbf{\Sigma} \mathbf{w} ~~~
\mbox{s.t.}\quad \sum_{i=1}^p w_i = 1 ~~~\mbox{and}~~~ w_i \geq 0,~i=1, ..., p.
\end{equation}
Since the objective function in \eqref{eq4} involves the covariance matrix, an ill-conditioned matrix estimate may lead to an unstable solution of $\mathbf{w}$, and thus greatly magnifies the error of portfolio allocation.
In addition, when a large amount of stocks is considered for the investment, a valid estimate of volatility matrix that accommodates for the high dimensions is needed.
Furthermore, because there is no natural variable ordering among stocks, the idea of ensemble model by the MCD is suitable for this analysis.
We hence apply the proposed WAE methods for the covariance estimation, which may lead to a good subsequent portfolio allocation strategy.
We compare the performances of the proposed methods with $\mathbf{\hat{\Sigma}}_{zheng}$ as in the simulation studies.

The data are consisted of the stock returns collected weekly from years 2020 to 2021 from the CSI300 Index, which contain $p = 100$ randomly selected stocks with $n = 104$ observations (52 weeks for each year).
The first $j$ observations are used to construct the covariance estimate  $\hat{\mathbf{\Sigma}}_j$, $j = 52, ..., 103$, for each method.
Then the estimated portfolio weight $\hat{\mathbf{w}}_{j+1}$ is the solution of the optimization problem \eqref{eq4} by replacing $\mathbf{\Sigma}$ with its estimate $\hat{\mathbf{\Sigma}}_j$.
We measure the performance of the estimated portfolio by the realized averaged weekly return (AWR) for the year of 2021, which is defined as
\begin{align*}
\mbox{AWR} = \frac{1}{52}\sum_{j = 52}^{103} \hat{\mathbf{w}}_{j+1}^T \mathbf{x}_{j+1},
\end{align*}
as well as their standard errors (SE) and the information ratio AWR/SE.
Although a larger AWR is the main pursue for the investors since it is the profit, a higher information ratio is also desired as it reflects both profit and risk.

\begin{table}[h]
	\centering
	\caption{The performances of weekly returns of portfolio for compared methods.}
	\vspace{0.05in}
	\begin{tabular}{c c c c c}
		\hline
		& AWR & SE & AWR/SE &    \\
		\hline
        $\mathbf{\hat{\Sigma}}_{zheng}$       & 0.647 & 0.565 & 1.145   \\
		$\mathbf{\hat{\Sigma}}_{wae}$         & 0.764 & 0.604 & 1.265  \\
		$\mathbf{\hat{\Sigma}}_{wae^\ast}$    & 0.795 & 0.606 & 1.296  \\
		\hline
	\end{tabular}
	\label{tab:portfolio}
\end{table}

\begin{figure}[h]
	\centering
	\includegraphics[width=1\linewidth]{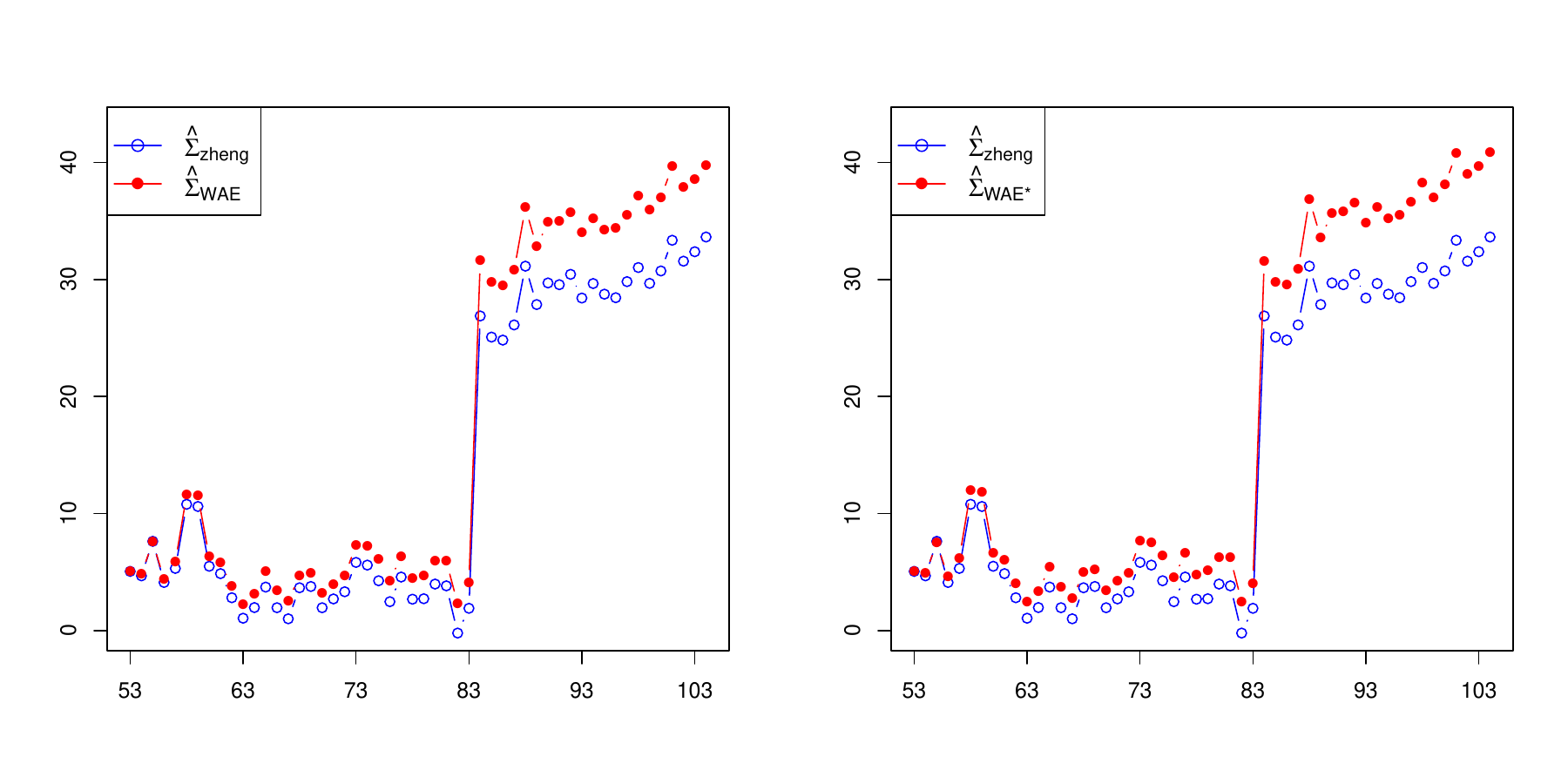}
	\caption{The cumulative weekly returns from compared methods for portfolio study.}
	\label{cu_prof}
\end{figure}

Table~\ref{tab:portfolio} presents the compared portfolio performances obtained by the proposed WAE methods and $\mathbf{\hat{\Sigma}}_{zheng}$ for year 2021.
We observe that although the proposed models produce a slightly larger SE, they have  higher AWR and AWR/SD values compared with the method of $\mathbf{\hat{\Sigma}}_{zheng}$, which indicates that the proposed methods gain more realized returns.
Furthermore, Figure~\ref{cu_prof} are the plots of the cumulative weekly returns for different compared methods.
Because the most parts of lines produced by the estimates $\mathbf{\hat{\Sigma}}_{wae^\ast}$ and $\mathbf{\hat{\Sigma}}_{wae}$ overlap each other as they behave similarly in this study, we plot their cumulative weekly returns separately in two plots in order to see them clearly in comparison with the estimate $\mathbf{\hat{\Sigma}}_{zheng}$.
From the plots it is observed that the cumulative weekly returns of the proposed WAE methods are consistently higher than that of the estimate $\mathbf{\hat{\Sigma}}_{zheng}$.
Especially as the investment time increases, the advantages of the proposed methods over the estimate $\mathbf{\hat{\Sigma}}_{zheng}$ are much more evidenced.
Therefore the proposed WAE methods are able to provide a valid strategy for the portfolio allocation for this set of stock data, verifying the merits of using different weights in the ensemble model of MCD for this study.

\section{Discussion}\label{sec:conc}
In this work, we propose a weighted average ensemble covariance matrix estimation via the MCD technique. The weights are adaptively determined by minimizing a risk function in terms of the Frobenius norm.
The risk function can be transformed into a matrix quadratic form with respect to the weights such that it can be solved easily by the off-the-shelf softwares.
At the same time, the proposed methods are able to automatically distinguish which variable orderings utilized in the MCD are useful, and thus preclude from the ensemble model the candidate covariance estimates corresponding to the useless variable orderings to reduce the model complexity.
The asymptotic convergence rates of the proposed methods are established, and
the numerical studies demonstrate the merits of the proposed methods that use the optimal weights under the MCD framework.

There are several research directions along this line in the future work.
Firstly, we would like to point out that although a simulation is performed to examine the robustness of the proposed methods, they are not designed for the non-normal data.
One may employ some robust techniques such as Huber method instead of LASSO in Equation \eqref{eq5} to decrease the impact of the outliers or heavy-tailed data on the linear models, hence leading to a robust covariance estimate \citep{wang2023novel}.
Secondly, we might improve the proposed WAE by constructing an appropriate objective function regarding the Cholesky factor matrices $\mathbf{L}$ and $\mathbf{D}$ rather than $\bm \Sigma$ for the weights optimization.
Such weights are used to compute the weighted averaging values on $\mathbf{\tilde{L}}_k$ and $\mathbf{\tilde{D}}_k$, i.e., $\mathbf{\hat{L}} = \sum_{k=1}^{M} \omega_k \mathbf{\tilde{L}}_k$ and $\mathbf{\hat{D}} = \sum_{k=1}^{M} \omega_k \mathbf{\tilde{D}}_k$.
Then a covariance estimate is $\mathbf{\hat{\Sigma}} = \mathbf{\hat{L}} \mathbf{\hat{D}} \mathbf{\hat{L}}^T$ which would have a smaller variance than $\mathbf{\hat{\Sigma}}_{wae}$, hence leading to a more accurate estimate \citep{Kang2020An}.
However, solving such objective functions in terms of $\mathbf{L}$ and $\mathbf{D}$ is more complicated.
Another idea to determine a covariance matrix estimate from matrices $\mathbf{\tilde{\Sigma}}_k$ is based on the concepts of Bures-Wasserstein distance and the Fr\'{e}chet mean, which are used to measure the central tendency of a set of positive semi-definite matrices \citep{zheng2023barycenter}.
However, whether such technique is suitable for the Cholesky-based covariance estimation still needs further investigation.

%

\section*{Data Availability Statement}
The data that support the findings of this study are available from the corresponding author upon reasonable request.

\section*{Disclosure statement}
There is no conflict of interest in this work.

%

\bibliographystyle{ECA_jasa}
\bibliography{Ref_WAE}

\section*{Appendix}
In this section, we provide the technical proofs of the main theoretical results in the paper.
To prove Theorem \ref{theory1}, we need to present three lemmas.

\begin{lemma}\label{lemma1_app}
Assume a positive definite matrix $\bm \Sigma$ has its modified Cholesky decomposition $\bm \Sigma = \bm L \bm D \bm L^T$. Under condition C1 (that is, the singular values of $\bm \Sigma$ are bounded), there exist constants $l_2$ and $u_2$ such that
\begin{align*}
&l_2 < sv_{p}(\bm L) \leq sv_{1}(\bm L) < u_2 \\
&l_2 < sv_{p}(\bm D) \leq sv_{1}(\bm D) < u_2.
\end{align*}
\end{lemma}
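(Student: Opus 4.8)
The plan is to bound the two factors $\bm D$ and $\bm L$ separately, handling the diagonal matrix $\bm D$ first (which is routine) and then transferring the singular-value bounds of $\bm\Sigma$ to $\bm L$ through an auxiliary symmetric factorization. Since $\bm\Sigma$ is positive definite and symmetric, its singular values coincide with its eigenvalues, so condition C1 reads $0 < l_1 < \lambda_{\min}(\bm\Sigma) \le \lambda_{\max}(\bm\Sigma) < u_1$.

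First I would bound the diagonal entries $d_1^2, \ldots, d_p^2$ of $\bm D$, which are exactly its singular values. In the MCD, $d_j^2$ equals the residual (conditional) variance of $X_j$ given $X_1, \ldots, X_{j-1}$, since the span of $\varepsilon_1, \ldots, \varepsilon_{j-1}$ coincides with that of $X_1, \ldots, X_{j-1}$. For the upper bound, conditioning cannot increase variance, so $d_j^2 \le \sigma_{jj}$, and each diagonal entry satisfies $\sigma_{jj} \le \lambda_{\max}(\bm\Sigma) < u_1$. For the lower bound I would use the identity $d_j^2 = 1/(\bm\Sigma_{1:j}^{-1})_{jj}$, where $\bm\Sigma_{1:j}$ is the leading $j \times j$ principal submatrix; since $(\bm\Sigma_{1:j}^{-1})_{jj} \le \lambda_{\max}(\bm\Sigma_{1:j}^{-1}) = 1/\lambda_{\min}(\bm\Sigma_{1:j})$ and Cauchy's interlacing theorem gives $\lambda_{\min}(\bm\Sigma_{1:j}) \ge \lambda_{\min}(\bm\Sigma) > l_1$, it follows that $d_j^2 > l_1$. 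Hence $l_1 < sv_{p}(\bm D) \le sv_{1}(\bm D) < u_1$.

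The harder part is $\bm L$, which is not diagonal, so its singular values cannot be read off directly. The key idea is to introduce $\bm M = \bm L \bm D^{1/2}$, so that $\bm\Sigma = \bm M \bm M^T$ and therefore $sv_{i}(\bm M) = \sqrt{sv_{i}(\bm\Sigma)}$, giving $\sqrt{l_1} < sv_{p}(\bm M) \le sv_{1}(\bm M) < \sqrt{u_1}$. Writing $\bm L = \bm M \bm D^{-1/2}$ and invoking the submultiplicativity of singular values for square matrices, namely $sv_{1}(\bm M \bm D^{-1/2}) \le sv_{1}(\bm M)\, sv_{1}(\bm D^{-1/2})$ and $sv_{p}(\bm M \bm D^{-1/2}) \ge sv_{p}(\bm M)\, sv_{p}(\bm D^{-1/2})$, together with the bounds on $\bm D$ from the previous step (so that $sv_{1}(\bm D^{-1/2}) < 1/\sqrt{l_1}$ and $sv_{p}(\bm D^{-1/2}) > 1/\sqrt{u_1}$), yields $\sqrt{l_1/u_1} < sv_{p}(\bm L) \le sv_{1}(\bm L) < \sqrt{u_1/l_1}$.

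Finally I would reconcile the two displays by setting $l_2 = \min\{l_1, \sqrt{l_1/u_1}\}$ and $u_2 = \max\{u_1, \sqrt{u_1/l_1}\}$, which simultaneously bound the singular values of both $\bm L$ and $\bm D$ and thereby complete the argument. I expect the only genuine obstacle to be the treatment of $\bm L$; the resolution is the symmetric factorization $\bm\Sigma = (\bm L \bm D^{1/2})(\bm L \bm D^{1/2})^T$, which lets one borrow the spectral bounds of $\bm\Sigma$ and then peel off the already-controlled diagonal factor $\bm D^{1/2}$ via submultiplicativity.
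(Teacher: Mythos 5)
Your argument is correct and complete. The paper itself omits the proof of this lemma, deferring to Lemma 3 of \cite{kang2024block}, so there is no in-paper proof to compare against; but your route is the standard one for such results: read off the singular values of the diagonal factor $\bm D$ as conditional variances (bounded above by the diagonal of $\bm\Sigma$ and below via the Schur-complement identity $d_j^2 = 1/(\bm\Sigma_{1:j}^{-1})_{jj}$ together with eigenvalue interlacing), then transfer the spectral bounds of $\bm\Sigma$ to $\bm L$ through the symmetric factorization $\bm\Sigma = (\bm L\bm D^{1/2})(\bm L\bm D^{1/2})^T$ and peel off $\bm D^{-1/2}$ using $sv_1(\bm A\bm B)\le sv_1(\bm A)sv_1(\bm B)$ and $sv_p(\bm A\bm B)\ge sv_p(\bm A)sv_p(\bm B)$ for square matrices. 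All the individual steps you invoke are valid, and the final constants $l_2=\min\{l_1,\sqrt{l_1/u_1}\}$, $u_2=\max\{u_1,\sqrt{u_1/l_1}\}$ do what is claimed. A minor remark: the lower bound on $d_j^2$ can be obtained even more directly from $\bm D=\bm L^{-1}\bm\Sigma(\bm L^{-1})^T$, since $(\bm L^{-1})^T\bm e_j$ has $j$th entry equal to $1$, giving $d_j^2\ge\lambda_{\min}(\bm\Sigma)$ without the Schur-complement identity; but your version is equally rigorous.
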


\noindent The proof of Lemma \ref{lemma1_app} is similar to that of Lemma 3 in \cite{kang2024block}, thus is omitted here.

\begin{lemma}\label{lemma2_app}
Suppose that the data are independently and identically distributed from $N(\bm 0, \mathbf{\Sigma}_0)$. Under the conditions C1 and C2, the estimates of Cholesky factor matrices $\hat{\bm L}_{\Pi}$ and $\hat{\bm D}_{\Pi}$ corresponding to a permutation $\Pi$ have the following consistent properties
\begin{align*}
& \| \hat{\bm L}_{\Pi} - \bm L_{0\Pi} \|_{F} = O_{p}(\sqrt{s \log (p) / n}),  \\
& \| \hat{\bm D}_{\Pi} - \bm D_{0\Pi} \|_{F} = O_{p}(\sqrt{p \log (p) / n}).
\end{align*}
\end{lemma}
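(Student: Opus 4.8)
Throughout I would fix the permutation $\Pi$ and, by relabelling the variables, assume without loss of generality that $\Pi$ is the identity; this is harmless because condition C1 is invariant under permutation of $\bm\Sigma_0$ and $s$ is defined as the maximum cardinality of $\mathscr{C}_{\Pi_k}$ over all orderings, so any bound obtained for a fixed ordering holds uniformly. The two displayed rates are then proved by analysing the $p-1$ penalized regressions in \eqref{eq5} row by row and aggregating the per-row errors into a Frobenius norm.

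The first ingredient is a uniform concentration bound for the empirical second moments. Since the data are Gaussian and the eigenvalues of $\bm\Sigma_0$ are bounded away from $0$ and $\infty$ by C1, sub-Gaussian (Bernstein-type) tail bounds give $\max_{i,j}|s_{ij}-\sigma_{0ij}| = O_p(\sqrt{\log p/n})$, so that the sample Gram matrices $\tfrac1n\,\mathbb{Z}^{(j)T}\mathbb{Z}^{(j)}$ and the cross-products $\tfrac1n\,\mathbb{Z}^{(j)T}\mathbf{x}^{(j)}$ deviate from their population versions uniformly at rate $\sqrt{\log p/n}$. Combining this with the bounded-eigenvalue statement of Lemma \ref{lemma1_app} and condition C3, the Gram matrices satisfy a restricted-eigenvalue (compatibility) condition with probability tending to one, which is exactly what is needed to convert the Lasso optimality (KKT) conditions into an estimation-error bound for each row.

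For $\hat{\bm L}_\Pi$, I would apply the standard basic-inequality argument to each regression $j$: writing $\bm\delta_j=\hat{\bm l}_j-\bm l_{0j}$ and $s_j=|\{i<j:\,l_{0ji}^{\Pi}\neq0\}|$ for the row-wise sparsity, the optimality of $\hat{\bm l}_j$ bounds $\|\bm\delta_j\|_2^2$ by the sum of a stochastic term, governed by $\|\mathbb{Z}^{(j)T}\bm\varepsilon^{(j)}\|_\infty$ and the restricted eigenvalue, and a deterministic penalty-bias term proportional to $\lambda_j$. The stochastic term contributes $O_p(s_j\log p/n)$, so that $\sum_{j}\|\bm\delta_j\|_2^2$ picks up $\sum_j s_j\le s$ and yields $\|\hat{\bm L}_\Pi-\bm L_{0\Pi}\|_F^2=O_p(s\log p/n)$; the aggregated penalty-bias term is controlled by $\sum_{j}\lambda_j=O(\sqrt{\log p/n})$ from condition C2 and is therefore dominated by the stochastic contribution. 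For $\hat{\bm D}_\Pi$, each $\hat d_j^2$ is the sample variance of the residuals $\mathbf{e}^{(j)}$; decomposing this variance as the true-error variance $d_{0j}^2$ plus terms driven by $\|\bm\delta_j\|$ and the concentration bound gives $|\hat d_j^2-d_{0j}^2|=O_p(\sqrt{\log p/n})$ uniformly in $j$, and since there are $p$ such diagonal entries we obtain $\|\hat{\bm D}_\Pi-\bm D_{0\Pi}\|_F^2=O_p(p\log p/n)$.

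The main obstacle is the sequential coupling of the regressions: the design $\mathbb{Z}^{(j)}$ is assembled from the \emph{estimated} residuals $\mathbf{e}^{(1)},\dots,\mathbf{e}^{(j-1)}$ rather than the unobserved true errors $\varepsilon_1,\dots,\varepsilon_{j-1}$, so an error made in an early row contaminates the covariates of every later row. I would handle this by induction on $j$: assuming the first $j-1$ estimated residuals are within the target rate of the true errors, I would bound the perturbations that this induces in $\tfrac1n\,\mathbb{Z}^{(j)T}\mathbb{Z}^{(j)}$ and $\tfrac1n\,\mathbb{Z}^{(j)T}\mathbf{x}^{(j)}$, substitute these into the per-row bound above, and verify that the propagated error does not change the order. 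It is precisely for this step that the cumulative control $\sum_{j=1}^p\lambda_j=O(\sqrt{\log p/n})$ in C2 is useful: it keeps the total penalization bias accumulated across the $p$ sequential regressions at the target order, ensuring that the propagated contamination remains negligible relative to $\sqrt{(s+p)\log p/n}$.
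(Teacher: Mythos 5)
Your route is genuinely different from the paper's: the paper does not touch the $p-1$ regressions individually, but instead writes down the joint penalized negative log-likelihood $\Phi(\bm D,\bm L)$, sets $G(\Delta_L,\Delta_D)=\Phi(\bm D_0+\Delta_D,\bm L_0+\Delta_L)-\Phi(\bm D_0,\bm L_0)$, and shows $G>0$ on the boundary of the balls $\|\Delta_L\|_F^2\leq\kappa_1 s\log(p)/n$, $\|\Delta_D\|_F^2\leq\kappa_2 p\log(p)/n$ (the Rothman--Lam--Fan style argument), so that a minimizer lies inside. That global argument never has to confront the two issues that are the hardest parts of your plan, and it is on those two points that your proposal has real gaps.

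First, the penalty level. Your per-row basic-inequality argument needs $\lambda_j$ to \emph{dominate} the noise, $\lambda_j\gtrsim\frac{1}{n}\|\mathbb{Z}^{(j)T}\bm\varepsilon^{(j)}\|_\infty\asymp\sqrt{\log(p)/n}$, for every $j$; that is what forces $\bm\delta_j$ into the cone on which the restricted-eigenvalue condition applies. When $j-1>n$ the sample Gram matrix is singular, so outside the cone the RE bound gives you nothing. But condition C2 only provides $\sum_{j=1}^p\lambda_j=O(\sqrt{\log(p)/n})$, so most $\lambda_j$ are of order $\sqrt{\log(p)/n}/p$ and cannot dominate the noise; conversely, choosing $\lambda_j\asymp\sqrt{\log(p)/n}$ for each row would make $\sum_j\lambda_j\asymp p\sqrt{\log(p)/n}$ and violate C2. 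Your sentence that the penalty-bias term ``is therefore dominated by the stochastic contribution'' has the required domination exactly backwards, and as written the row-wise Lasso bound $\|\bm\delta_j\|_2^2=O_p(s_j\log(p)/n)$ does not follow. Second, the error propagation. You correctly identify that $\mathbb{Z}^{(j)}$ is built from estimated residuals, but the induction is only asserted. The perturbation of $\tfrac1n\mathbb{Z}^{(j)T}\mathbb{Z}^{(j)}$ is governed by the accumulated coefficient errors $\sum_{i<j}\|\hat{\bm l}_i-\bm l_{0i}\|_2$, which at the target rate is already $O_p(\sqrt{s\log(p)/n})$ --- the same order as the quantity being bounded --- so ``does not change the order'' requires showing that this perturbation leaves the restricted eigenvalue and the noise level $\|\mathbb{Z}^{(j)T}\bm\varepsilon^{(j)}\|_\infty$ of the same order, uniformly in $j$; controlling $\sum_j\lambda_j$ is not the mechanism that delivers this. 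To make your approach rigorous you would need either per-row penalties at the $\sqrt{\log(p)/n}$ scale (i.e., a different condition C2) together with an explicit propagation lemma, or you should switch to the paper's joint-likelihood ball argument, which bounds $\|\Delta_L\|_F$ and $\|\Delta_D\|_F$ simultaneously and bypasses both difficulties.
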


\begin{proof}
To simplify the notations, we prove it without the symbol $\Pi$ since the conclusions hold for any variable orderings.
From Equation \eqref{eq2}, it is easy to see
$\bm \epsilon = \bm L^{-1} \bm X \sim N(\bm 0, \bm D)$.
Hence, the penalized negative log likelihood, up to some constant, is
\begin{align*}
\Phi(\bm D, \bm L) = (\log \left| \bm D \right| + \tr [(\bm L^T)^{-1} \bm D^{-1} \bm L^{-1} \bm S] + \sum_{j=1}^{p} \lambda_{j} \sum_{k<j}|l_{jk}|.
\end{align*}
Define $G(\Delta_{L}, \Delta_{D}) = \Phi(\bm D_{0} + \Delta_{D}, \bm L_{0} + \Delta_{L}) - \Phi(\bm D_{0}, \bm L_{0})$.
Let $\mathcal{B}_{1} = \{ \Delta_{L}: \| \Delta_{L} \|_{F}^{2} \leq \kappa_{1} s \log(p) / n \}$ and
$\mathcal{B}_{2} = \{ \Delta_{D}: \| \Delta_{D} \|_{F}^{2} \leq \kappa_{2} p \log(p) / n \}$, where $\kappa_{1} \geq 0$ and $\kappa_{2} \geq 0$ are constants.
Denote $\partial \mathcal{B}_{1}$ and $\partial \mathcal{B}_{2}$ as the boundaries of $\mathcal{B}_{1}$ and $\mathcal{B}_{2}$.
We will show that for each $\Delta_{L} \in \partial \mathcal{B}_{1}$ and $\Delta_{D} \in \partial \mathcal{B}_{2}$, probability $\Pr(G(\Delta_{L}, \Delta_{D}) > 0)\rightarrow $ 1 as $n \rightarrow \infty$ for sufficiently large $\kappa_{1}$ and $\kappa_{2}$.
Additionally, since $G(\Delta_{L}, \Delta_{D}) = 0$ when $\Delta_{L} = 0$ and $\Delta_{D} = 0$, the $G(\Delta_{L}, \Delta_{D})$ achieves its minimum value when
$\Delta_{L} \in \mathcal{B}_{1}$ and $\Delta_{D} \in \mathcal{B}_{2}$.
That is $\| \Delta_{L} \|_{F}^{2} = O_{p} (s \log(p) / n)$ and $\| \Delta_{D} \|_{F}^{2} = O_{p} (p \log(p) / n)$.

Assume $\| \Delta_{L} \|_{F}^{2} = \kappa_{1} s \log(p) / n$ and $\| \Delta_{D} \|_{F}^{2} = \kappa_{2} p \log(p) / n$.
According to Lemma \ref{lemma1_app} and condition C1,
there exists a constant $\delta$ satisfying $0 < 1/\delta < sv_{p}(\bm L_{0}) \leq sv_{1}(\bm L_{0}) < \delta < \infty$ and $0 < 1/\delta < sv_{p}(\bm D_{0}) \leq sv_{1}(\bm D_{0}) < \delta < \infty$.
Let $\bm D = \bm D_{0} + \Delta_{D}$ and $\bm L = \bm L_{0} + \Delta_{L}$,
then we decompose $G(\Delta_{L}, \Delta_{D})$ into five parts and then bound them separately.
\begin{align*}
G(\Delta_{L}, \Delta_{D}) &= \Phi(\bm D, \bm L) - \Phi(\bm D_{0}, \bm L_{0}) \\
&= \log \left| \bm D \right| - \log \left| \bm D_{0} \right| + \tr ((\bm L^T)^{-1} \bm D^{-1} \bm L^{-1} \bm S) - \tr ((\bm L_{0}^T)^{-1} \bm D^{-1}_{0} \bm L^{-1}_{0} \bm S) \\
&~~~+ \sum_{j=1}^{p} \lambda_{j} \sum_{k<j}|l_{jk}| - \sum_{j=1}^{p} \lambda_{j} \sum_{k<j}|l_{0jk}| \\
&= \log \left| \bm D \right| - \log \left| \bm D_{0} \right| + \tr[(\bm D^{-1} - \bm D_{0}^{-1})\bm D_{0}] - \tr[(\bm D^{-1} - \bm D_{0}^{-1})\bm D_{0}] \\
&~~~+ \tr [(\bm L^T)^{-1} \bm D^{-1} \bm L^{-1} \bm S] - \tr [(\bm L_{0}^T)^{-1} \bm D^{-1}_{0} \bm L^{-1}_{0} \bm S] \\
&~~~+ \sum_{j=1}^{p} \lambda_{j} \sum_{k<j}|l_{jk}| - \sum_{j=1}^{p} \lambda_{j} \sum_{k<j}|l_{0jk}| \\
&= \log \left| \bm D \right| - \log \left| \bm D_{0} \right| + \tr[(\bm D^{-1} - \bm D_{0}^{-1})\bm D_{0}] + \tr [(\bm L^T)^{-1} \bm D^{-1} \bm L^{-1} \bm S] \\
&~~~ - \tr [(\bm L^T)^{-1} \bm D^{-1}_{0} \bm L^{-1} \bm S] + \tr [(\bm L^T)^{-1} \bm D^{-1}_{0} \bm L^{-1} \bm S] \\
&~~~+ \sum_{j=1}^{p} \lambda_{j} \sum_{k<j}|l_{jk}| - \sum_{j=1}^{p} \lambda_{j} \sum_{k<j}|l_{0jk}|  \\
&= \log \left| \bm D \right| - \log \left| \bm D_{0} \right| + \tr[(\bm D^{-1} - \bm D_{0}^{-1})\bm D_{0}] + \tr(\bm D^{-1} - \bm D_{0}^{-1})[\bm L^{-1} (\bm S - \bm \Sigma_{0}) (\bm L^T)^{-1}] \\
&~~~ + \tr [\bm D_{0}^{-1} (\bm L^{-1} (\bm S - \bm \Sigma_{0}) (\bm L^T)^{-1} - \bm L^{-1}_{0} (\bm S - \bm \Sigma_{0}) (\bm L_{0}^T)^{-1})] \\
&~~~ + \tr [\bm D_{0}^{-1} (\bm L^{-1} \bm \Sigma_{0} (\bm L^T)^{-1} - \bm L^{-1}_{0} \bm \Sigma_{0} (\bm L_{0}^T)^{-1})] + \tr(\bm D^{-1} - \bm D_{0}^{-1})(\bm L^{-1} \bm \Sigma_{0} (\bm L^T)^{-1} - \bm D_{0}) \\
&~~~ + \sum_{j=1}^{p} \lambda_{j} \sum_{k<j}|l_{jk}| - \sum_{j=1}^{p} \lambda_{j} \sum_{k<j}|l_{0jk}|  \\
&= \log \left| \bm D \right| - \log \left| \bm D_{0} \right| + \tr[(\bm D^{-1} - \bm D_{0}^{-1})\bm D_{0}] + \tr(\bm D^{-1} - \bm D_{0}^{-1})[\bm L^{-1} (\bm S - \bm \Sigma_{0}) (\bm L_{0}^T)^{-1}] \\
&~~~ + \tr [\bm D_{0}^{-1} (\bm L^{-1} (\bm S - \bm \Sigma_{0}) (\bm L^T)^{-1} - \bm L^{-1}_{0} (\bm S - \bm \Sigma_{0}) (\bm L_{0}^T)^{-1})] \\
&~~~ + \tr [\bm D^{-1} (\bm L^{-1} \bm \Sigma_{0} (\bm L^T)^{-1} - \bm L^{-1}_{0} \bm \Sigma_{0} (\bm L_{0}^T)^{-1})] + \sum_{j=1}^{p} \lambda_{j} \sum_{k<j}|l_{jk}| - \sum_{j=1}^{p} \lambda_{j} \sum_{k<j}|l_{0jk}|  \\
& \triangleq R_{1} + R_{2} + R_{3} + R_{4} + R_{5},
\end{align*}
where
\begin{align*}
R_{1} &= \log \left| \bm D \right| - \log \left| \bm D_{0} \right| + \tr[(\bm D^{-1} - \bm D_{0}^{-1})\bm D_{0}],\\
R_{2} &= \tr(\bm D^{-1} - \bm D_{0}^{-1})[\bm L^{-1} (\bm S - \bm \Sigma_{0}) (\bm L^T)^{-1}],\\
R_{3} &= \tr [\bm D_{0}^{-1} (\bm L^{-1} (\bm S - \bm \Sigma_{0}) (\bm L^T)^{-1} - \bm L^{-1}_{0} (\bm S - \bm \Sigma_{0}) (\bm L_{0}^T)^{-1})], \\
R_{4} &= \tr [\bm D^{-1} (\bm L^{-1} \bm \Sigma_{0} (\bm L^T)^{-1} - \bm L^{-1}_{0} \bm \Sigma_{0} (\bm L_{0}^T)^{-1})], \\
R_{5} &= \sum_{j=1}^{p} \lambda_{j} \sum_{k<j}|l_{jk}| - \sum_{j=1}^{p} \lambda_{j} \sum_{k<j}|l_{0jk}|.
\end{align*}
From the proof of Lemma 3 in \cite{kang2021variable}, we can derive

(1) $R_{1} \geq 1/8 \delta^4 \|\Delta_{D}\|_{F}^{2}$;

(2) for any $\epsilon > 0$, there exist constants $V_{1} > 0$ and $V_{2} > 0$ such that
$|R_{2}| \leq V_{1} \sqrt{p \log (p) / n} \|\Delta_{D}\|_{F}$;
and $R_{4} - |R_{3}| > 1 /2 \delta^4 \| \Delta_{L} \|_{F}^2 - V_{2} \sqrt{\log (p) / n} \sum_{(j, k) \in W^c} | l_{jk} | - V_{2} \sqrt{s \log (p) / n} \| \Delta_{L} \|_{F}$,
where the set $W = \{(j, k): k < j, l_{0jk} \neq 0 \}$.

Next, the term $R_{5}$ is further decomposed as
\begin{align*}
R_{5} = \sum_{j=1}^{p} \lambda_{j} \sum_{(j, k) \in W^c}|l_{jk}| + \sum_{j=1}^{p} \lambda_{j}
\sum_{(j, k) \in W} (|l_{jk}| - |l_{0jk}|) = R_{5}^{(1)} + R_{5}^{(2)},
\end{align*}
where $R_{5}^{(1)} = \sum_{j=1}^{p} \lambda_{j} \sum_{(j, k) \in W^c}|l_{jk}|$, and
\begin{align*}
| R_{5}^{(2)} | = | \sum_{j=1}^{p} \lambda_{j} \sum_{(j, k) \in W}(|l_{jk}| - |l_{0jk}|) |
\leq \sum_{j=1}^{p} \lambda_{j} \sum_{(j, k) \in W}| l_{jk} - l_{0jk} |
\leq \sum_{j=1}^{p} \lambda_{j} \sqrt{s} \| \Delta_{L} \|_{F}.
\end{align*}
Combine all the bounded terms $R_1$ to $R_5$ together, with probability greater than $1 - 2 \epsilon$, we have
\begin{align*}
&~~~ | G(\Delta_{L}, \Delta_{D}) | \\
&\geq R_{1} - |R_{2}| + R_{4} - |R_{3}| + R_{5}^{(1)} - |R_{5}^{(2)}|   \\
&\geq \frac{1}{8 \delta^4} \|\Delta_{D}\|_{F}^{2} - V_{1} \sqrt{p \log (p) / n} \|\Delta_{D}\|_{F} + \frac{1}{2 \delta^4} \| \Delta_{L} \|_{F}^2 - V_{2} \sqrt{\log (p) / n} \sum_{(j, k) \in W^c} | l_{jk} |   \\
&~~~ - V_{2}  \sqrt{s \log (p) / n} \| \Delta_{L} \|_{F} + \sum_{j=1}^{p} \lambda_{j} \sum_{(j, k) \in W^c}|l_{jk}| - \sum_{j=1}^{p} \lambda_{j} \sqrt{s} \| \Delta_{L} \|_{F}    \\
&= \frac{\kappa_{2} p \log(p)}{8 \delta^4 n} - \frac{V_{1} \sqrt{\kappa_{2}} p \log(p)}{n} + \frac{\kappa_{1} s \log(p)}{2 \delta^{4} n} - V_{2} \sqrt{\log (p) / n} \sum_{(j, k) \in W^c} | l_{jk} |  \\
&~~~ - \frac{V_{2} \sqrt{\kappa_{1}} s \log(p)}{n} + \sum_{j=1}^{p} \lambda_{j} \sum_{(j, k) \in W^c}|l_{jk}| - s \sqrt{\kappa_{1} \log (p) / n} \sum_{j=1}^{p} \lambda_{j} \\
&= \frac{\sqrt{\kappa_{2}} p \log(p)}{n} (\frac{\sqrt{\kappa_{2}}}{8\delta^4} - V_{1}) + \frac{\sqrt{\kappa_{1}} s p \log(p)}{n} (\frac{\sqrt{\kappa_{1}}}{2\delta^{4}} - \frac{\sum_{j=1}^{p} \lambda_{j}}{\sqrt{\log (p) / n}} - V_{2} )     \\
&~~~ + \sum_{(j, k) \in W^c}|l_{jk}| (\sum_{j=1}^{p} \lambda_{j} - V_{2} \sqrt{\log (p) / n}).
\end{align*}
Assume $\sum_{j=1}^{p} \lambda_{j} = V_{3}(\sqrt{\log (p) / n})$ where $V_{3} > V_{2}$, and choose $\kappa_{1} > 4 \delta^8 (K + V_{2})^2$ as well as $\kappa_{2} > 64\delta^8 V_{1}^2$, then $G(\Delta_{L}, \Delta_{D}) > 0$, which completes the proof.
\end{proof}

\begin{lemma}\label{Lemma:consistence1}
Suppose that the data are independently and identically distributed from $N(\bm 0, \mathbf{\Sigma}_0)$. Under the conditions C1 and C2, we have
\begin{align*}
\| \mathbf{\tilde{\Sigma}}_k - \mathbf{\Sigma}_0 \|_{F} = O_{p} \left( \sqrt{\frac{(s + p) \log (p)}{n}} \right),~~~ k = 1, ..., M.
\end{align*}
\end{lemma}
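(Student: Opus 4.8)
The plan is to combine Lemmas \ref{lemma1_app} and \ref{lemma2_app} with the invariance of the Frobenius norm under orthogonal transformations. Writing $\mathbf{\tilde{\Sigma}}_k = \mathbf{P}_{\Pi_k} \hat{\mathbf{L}}_{\Pi_k} \hat{\mathbf{D}}_{\Pi_k} \hat{\mathbf{L}}_{\Pi_k}^T \mathbf{P}_{\Pi_k}^T$ and $\mathbf{\Sigma}_0 = \mathbf{P}_{\Pi_k} \mathbf{L}_{0\Pi_k} \mathbf{D}_{0\Pi_k} \mathbf{L}_{0\Pi_k}^T \mathbf{P}_{\Pi_k}^T$, and using that $\mathbf{P}_{\Pi_k}$ is orthogonal so that conjugation by it preserves $\| \cdot \|_F$, the first step is to reduce the claim to bounding $\| \hat{\mathbf{L}} \hat{\mathbf{D}} \hat{\mathbf{L}}^T - \mathbf{L}_0 \mathbf{D}_0 \mathbf{L}_0^T \|_F$, where I suppress the subscript $\Pi_k$ exactly as in Lemma \ref{lemma2_app}. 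This removes the permutation from the picture and lets me work directly with the factor estimates.

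The second step is a telescoping decomposition of the factor difference,
\begin{align*}
\hat{\mathbf{L}} \hat{\mathbf{D}} \hat{\mathbf{L}}^T - \mathbf{L}_0 \mathbf{D}_0 \mathbf{L}_0^T = (\hat{\mathbf{L}} - \mathbf{L}_0) \hat{\mathbf{D}} \hat{\mathbf{L}}^T + \mathbf{L}_0 (\hat{\mathbf{D}} - \mathbf{D}_0) \hat{\mathbf{L}}^T + \mathbf{L}_0 \mathbf{D}_0 (\hat{\mathbf{L}} - \mathbf{L}_0)^T,
\end{align*}
followed by the triangle inequality and the submultiplicative bounds $\| \mathbf{A}\mathbf{B} \|_F \leq \| \mathbf{A} \|_{\mathrm{op}} \| \mathbf{B} \|_F$ and $\| \mathbf{A}\mathbf{B} \|_F \leq \| \mathbf{A} \|_F \| \mathbf{B} \|_{\mathrm{op}}$, where $\| \cdot \|_{\mathrm{op}} = sv_1(\cdot)$ denotes the spectral norm. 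Applying these to each of the three terms isolates the Frobenius norm of $\hat{\mathbf{L}} - \mathbf{L}_0$ or of $\hat{\mathbf{D}} - \mathbf{D}_0$, multiplied by spectral norms of the remaining factors. By Lemma \ref{lemma2_app} these Frobenius factors are $O_p(\sqrt{s \log(p)/n})$ and $O_p(\sqrt{p \log(p)/n})$ respectively; provided the spectral-norm factors are $O_p(1)$, the first and third terms contribute $O_p(\sqrt{s\log(p)/n})$ and the middle term contributes $O_p(\sqrt{p\log(p)/n})$. Summing and using $\sqrt{a} + \sqrt{b} \leq \sqrt{2(a+b)}$ then yields the stated rate $O_p(\sqrt{(s+p)\log(p)/n})$.

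The step I expect to be the main obstacle is controlling the spectral norms of the estimated factors $\hat{\mathbf{L}}$ and $\hat{\mathbf{D}}$, which appear in the first two terms above. The true factors are handled immediately by Lemma \ref{lemma1_app}, giving $\| \mathbf{L}_0 \|_{\mathrm{op}} \leq u_2$ and $\| \mathbf{D}_0 \|_{\mathrm{op}} \leq u_2$. For the estimated factors I would argue that $\| \hat{\mathbf{L}} \|_{\mathrm{op}} \leq \| \mathbf{L}_0 \|_{\mathrm{op}} + \| \hat{\mathbf{L}} - \mathbf{L}_0 \|_{\mathrm{op}} \leq u_2 + \| \hat{\mathbf{L}} - \mathbf{L}_0 \|_F$, and similarly for $\hat{\mathbf{D}}$; since the consistency rates of Lemma \ref{lemma2_app} force these Frobenius distances to be $o_p(1)$ in the relevant high-dimensional regime (implicitly invoking $(s+p)\log(p) = o(n)$ as in condition C3), the spectral norms of $\hat{\mathbf{L}}$ and $\hat{\mathbf{D}}$ are $O_p(1)$ with probability tending to one. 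Once this perturbation-of-singular-values argument is in place, the remaining estimates are routine, so this boundedness is the only point requiring genuine care.
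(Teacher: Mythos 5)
Your proof is correct and follows essentially the same route as the paper: the paper's own proof simply cites Theorem 2 of \cite{kang2021variable} for the inequality $\| \mathbf{\tilde{\Sigma}}_k - \mathbf{\Sigma}_0 \|_{F}^2 = O_{p}(\| \mathbf{\tilde{L}}_k - \mathbf{L}_{0} \|_{F}^2) + O_{p}(\| \mathbf{\tilde{D}}_k - \mathbf{D}_{0} \|_{F}^2)$ and then invokes Lemma \ref{lemma2_app}, whereas you open that black box and derive the inequality directly via the telescoping decomposition, submultiplicativity, and the spectral bounds from Lemma \ref{lemma1_app}. Your observation that controlling $\| \hat{\mathbf{L}} \|_{\mathrm{op}}$ and $\| \hat{\mathbf{D}} \|_{\mathrm{op}}$ implicitly requires condition C3 (which the lemma statement omits but Theorem \ref{theory1} assumes) is a fair and accurate caveat.
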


\begin{proof}
From the proof of Theorem 2 in \cite{kang2021variable}, it follows that
\begin{align*}
\| \tilde{\bm \Sigma}_k - \bm \Sigma_{0} \|_{F}^2 = O_{p}(\| \tilde{\bm L}_k - \bm L_{0} \|_{F}^2) +  O_{p}(\| \tilde{\bm D}_k - \bm D_{0} \|_{F}^2)
= O_{p}\left( \frac{(s + p)\log (p)}{n} \right),
\end{align*}
where the second equation results from Lemma \ref{lemma2_app}.
\end{proof}

Lemma \ref{Lemma:consistence1} establishes the convergence rate for each candidate estimate $\mathbf{\tilde{\Sigma}}_k$ for any variable orderings, based on which we prove the main results of Theorem \ref{theory1}.

\begin{proof}{\bf Proof of Theorem \ref{theory1}}\\
By Equation \eqref{eq:prop}, we obtain
\begin{align*}
\| \mathbf{\hat{\Sigma}}_{wae} - \mathbf{\Sigma}_0 \|_{F} = \| \sum_{k=1}^{M} \omega_k \mathbf{\tilde{\Sigma}}_k - \mathbf{\Sigma}_0 \|_{F} = \| \sum_{k=1}^{M} \omega_k ( \mathbf{\tilde{\Sigma}}_k - \mathbf{\Sigma}_0 ) \|_{F}.
\end{align*}
Based on the property of Frobenius norm, together with Lemma \ref{Lemma:consistence1}, we have
\begin{align*}
\| \mathbf{\hat{\Sigma}}_{wae} - \mathbf{\Sigma}_0 \|_{F} \leq \sum_{k=1}^{M} \| \omega_k (\mathbf{\tilde{\Sigma}}_k - \mathbf{\Sigma}_0) \|_{F} =
\sum_{k=1}^{M} \omega_k \| \mathbf{\tilde{\Sigma}}_k - \mathbf{\Sigma}_0 \|_{F}
= O_{p} \left( \sqrt{\frac{(s + p) \log (p)}{n}} \right).
\end{align*}
The proofs for the convergence rate of the estimator $\mathbf{\hat{\Sigma}}_{wae^\ast}$ follow the same principles.
\end{proof}

\end{document}